\newcommand{\bb}[1]{\left({#1}\right)}					
\newcommand{\sq}[1]{\left[#1\right]}						
\newcommand{\cc}[1]{\left\{#1\right\}}					
\newcommand{\op}[1]{\mathcal{#1}}
\newcommand{\ordx}[1]{{\cal O}}					
\newcommand{\abs}[1]{\left|#1\right|}					
\newcommand{\sfrac}[2]{\mbox{$\frac{#1}{#2}$}}	
\newcommand{\hf}{\mbox{$\frac12$}}
\renewcommand{\v}[1]{{\bf #1}}
\newcommand{\BG}{$\op B$-$\op G$}
\newcommand{\tb}{\tau}
\newcommand{\subrank}{\operatorname{subrank}}
\newcommand{\Cr}[2]{\Big(\mbox{\footnotesize$\begin{array}{c}#1\\#2\end{array}$}\Big)}
\newcommand{\alphab}{{\mbox{\boldmath$\alpha$}}}
\newtheorem{theorem}{Theorem}[section]
\newtheorem{lemma}[theorem]{Lemma}
\newtheorem{corollary}[theorem]{Corollary}
\newtheorem{definition}{Definition}[section]
\crefname{equation}{}{}
\Crefname{equation}{}{}
\begin{document}

\title{Practical conditions to locate $\Sigma^1$ singularities in $\mathbb R^n$}
\title{Catastrophe conditions for $\mathbb R^n$ vector fields: \\relation to Thom-Boardman theory}
\title{When a bifurcation point is unfindable, \\look for its underlying catastrophe}
\title{Underlying catastrophes: \\how to find high codimension \\ bifurcation points of vector fields and PDEs}
\title{Elementary catastrophes underlying \\bifurcations of vector fields and PDEs}
%
%
\author{Mike R. Jeffrey\footnote{\tiny Department of Engineering Mathematics, University of Bristol, Ada Lovelace Building, Bristol BS8 1TW, UK, email: mike.jeffrey@bristol.ac.uk}
}
\date{\today}


\maketitle

\begin{abstract}
A practical method was proposed recently for finding local bifurcation points in an $n$-dimensional vector field $F$ by seeking their `{\it underlying catastrophes}'. 
Here we apply the idea to the homogeneous steady states of a partial differential equation as an example of the role that catastrophes can play in reaction diffusion. What are these `underlying' catastrophes? We then show they essentially define a restricted class of `solvable' rather than `all classifiable' singularities, by identifying degenerate zeros of a vector field $F$ without taking into account its vectorial character. As a result they are defined by a minimal set of $r$ analytic conditions that provide a practical means to solve for them, and a huge reduction from the calculations needed to classify a singularity, which we will also enumerate here. In this way, {\it underlying catastrophes} seem to allow us apply Thom's {\it elementary catastrophes} in much broader contexts. 
\end{abstract}

{\scriptsize
\setstretch{0.8}
\tableofcontents
\setstretch{1}
\bigskip
}

\newpage
\section{Introduction}\label{sec:intro}

Catastrophe theory was introduced by Ren\'e Thom to explain how discontinuities in behaviour can arise out of smooth changes in a system's equations. This marked a conceptual step in the history of calculus that was very influential in physics, particularly in optics \cite{ps96,1977zeeman}, and despite some overly speculative applications that attracted fair skepticism, its importance to science was never really in doubt, see for example the polemic \cite{catastrophe77nature} and its rebuttal \cite{berry77cat}. 

These days catastrophes have been largely absorbed into the broader theory of bifurcations, but here we will argue that a wider application of Thom's original concept is possible, using the idea of {\it underlying catastrophes} introduced in \cite{j22cat}. This concept attempts to apply the elementary catastrophes to much wider classes of systems than they were intended to encompass, basically to any multidimensional systems with general spatial and temporal variations. So it is worth summarizing Thom's catastrophes briefly, pertaining to a gradient function $\v F=\nabla V$, in preparation to apply them to any general vector field $\v F$.   

Indeed Thom's elementary catastrophes are so simple, we can encapsulate them in just \cref{fig:th}. 
\begin{figure}[h!]\centering
\includegraphics[width=0.99\textwidth]{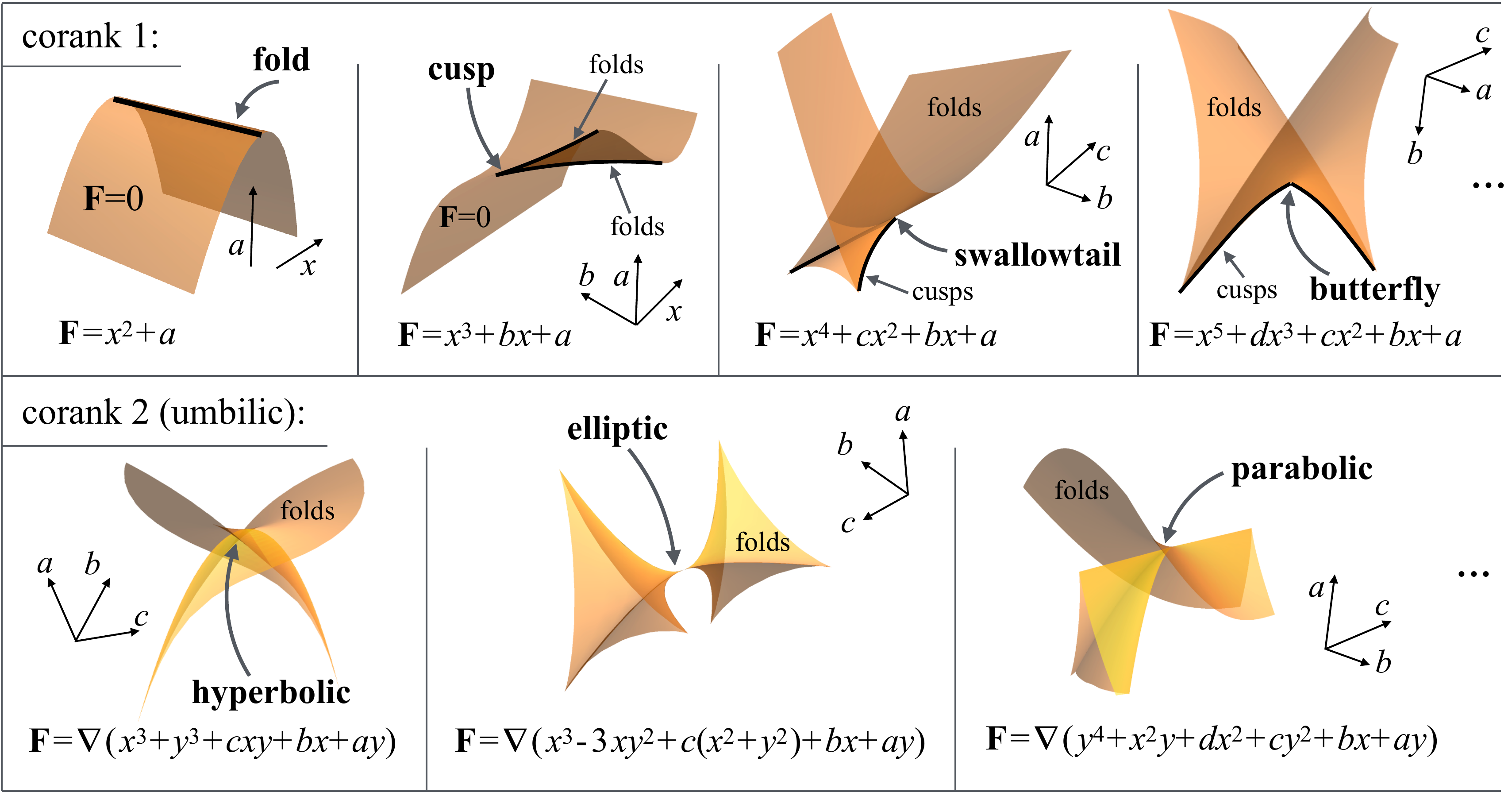}
\vspace{-0.3cm}\caption{\small\sf Thom's elementary catastrophes. A scalar function $V$ over variables $(x,y,...)$ typically has turning points where $\v F=\nabla V=0$. The catastrophe is the set of {\it folds} where these turning points `collide' as parameters $\alphab=(a,b,...)$ vary. As we add more variables $(x,y,...)$, higher `coranks' of catastrophe are possible. Note that we choose different spaces to picture these in as we add dimensions and parameters.}\label{fig:th}\end{figure}
Consider a scalar function $V:\mathbb R^n\times\mathbb R^p\to\mathbb R$, over variables $\v x\in\mathbb R^n$ and parameters $\alphab\in\mathbb R^p$. This will typically have turning points where $\nabla V=0$. Let $\v F=\nabla V$. The {\it catastrophe surface} is the manifold upon which those turning points are degenerate, colliding (as the parameters $\alphab$ vary) pairwise in folds, folds colliding in cusps, cusps colliding in swallowtails, and so on. In \cref{fig:th} we plot the catastrophe surfaces of $\v F$ based on Thom's forms in \cite{t75}, but using the specific forms laid out in the excellent exposition \cite{ps96}, in variables $\v x=(x,y,....)$ and parameters $\alphab=(a,b,c,d,...)$ (and to depict the butterfly/parabolic we set $d=0$). The simplest are the $n=1$, or `corank 1' catastrophes. In $n=2$ or more dimensions we can also have the `corank 2' {\it umbilic} catastrophes. This paper will deal in detail with the corank 1 {\it underlying catastrophes}, and the first forays into corank 2 can be found in \cite{j24cat}.  We will give explicit conditions that can be solved to find these folds, cusps, swallowtails, etc., here, but in a manner not restricted to gradient vector fields.

Notice that catastrophes provide two powerful tools: first a way to classify the topology of a function (via its stationary points), and second, simple conditions to locate changes in that topology (as stationary points appear or disappear with changes in parameters). Notice, however, that the catastrophes deal only with the turning points of a {\it potential} (scalar) function $V$, and hence the zeros of its gradient function $\v F=\nabla V$. This is what we will generalise here. We will also show how this permits the finding of singularities that are otherwise incomputable. 

Not only do bifurcations become increasingly difficult to classify as we go to systems with more variables and more parameters, but to classify any bifurcation, its location must first be known. Here lies our problem. The known classification of bifurcations (e.g. \cite{t75,a93v,k98,gh02}) involve calculations whose implicitness, and shear numerousness (as we will show here), make locating such points impossible beyond low order cases. 
Moreover it is unclear to what extent such classifications can be applied beyond mere vector fields, for example to spatiotemporal problems like reaction-diffusion. Some authors have been drawn to catastrophes as a simpler means to understand behaviour in dynamical systems and partial differential equations, and yet hindered by the fact that catastrophes apply only to scalar potentials. Recent examples include the identification of butterfly catastrophes in reaction diffusion problems in \cite{wehner20,nosov16}, a cusp catastrophe in a crowd density problem in \cite{cuspjam10}, and attempts to apply catastrophe theory to specific contexts in liquid crystals \cite{turzi09} or variational PDEs \cite{kreusser20}. The importance of nonlinear terms in reaction-diffusion has of course been clear right from the outset of the modern interest in pattern formation \cite{dawesTuring,sh77,turing52}; I will recount some of this in \cref{sec:pde}.

A more direct extension of catastrophes to such problems is made possible by a recent suggestion from \cite{j22cat}, that {\it any} bifurcation actually has an elementary catastrophe {\it underlying} it. The {\it underlying catastophe} was described in \cite{j22cat} as `a zero of a vector field encountering a singularity', and I will make this identification precise here. 
I also give an example of a reaction-diffusion problem where diffusion terms trigger an underlying catastrophe, separating a system's steady state concentrations into different mixing regimes, the boundaries between which are clearly recognisable as a butterfly catastrophe. This is done without restricting attention to scalar catastrophes at any point.

We define an underlying catastrophe as follows. 

Take a smooth vector field $\v F:\mathbb R^n\times\mathbb R^r\to\mathbb R^n$ with components $\v F=(f_1,f_2,...,f_n)$. Write $\v F=\v F(\v x,\alphab)$ for a variable $\v x\in\mathbb R^n$ and parameter $\alphab\in\mathbb R^r$, with gradient operator $\nabla=\sfrac{\partial \;}{\partial\v x}$. 
A point where the vectors $\nabla f_1,...,\nabla f_n$, are linearly dependent creates a singularity, but we can typically assume that any $n-1$ of the gradient vectors $\nabla f_1,...,\nabla f_n$, are linearly independent. The Jacobian determinant $\op B_1=\abs{\nabla\bb{f_1,,f_2,...,f_n}}$, then vanishes, and the degeneracy of the point can be characterized by a sequence of determinants $\op B_i=\abs{\nabla\bb{\op B_{i-1}, f_2, ...,  f_n}}$, letting $\op B_0=f_1$. 

The vanishing of all $\op B_i$ for $i=1,...,r$ will signal a catastrophe of codimension $r$. The system $0=\v F=\op B_1=...=\op B_r$ is solvable to find the point where a catastrophe occurs if an accompanying family of determinants $\op G_{r,k_1..k_{r-1}}$ is non-vanishing. The definition of these $\op B$ and $\op G$ determinants will be outlined below, and given more completely in \cref{sec:bg}. 

 %
This is easiest to see by considering a vector field of the form
\begin{align}\label{primary}
\v F=&\Big(\;f(x_1,\alphab)+\underline{\tau}\cdot\underline{x}\;,\;\lambda_2x_2\;,\;...\;,\;\lambda_nx_n\;\Big)\\
{\rm where}\quad 
&\;\;f(x_1,\alphab)
	=x_1^{r+1}+\alpha_{r}x_1^{r-1}+...+\alpha_{2}x_1+\alpha_{1}\;,\nonumber
\end{align}
with $\underline{\tau}\cdot\underline{x}=\tau_2x_2+...+\tau_nx_n$, for which 
the conditions $\op B_i=0$ reduce to just the one-dimensional conditions $\sfrac{\partial^i\;}{\partial x_1^i}f=0$ that identify elementary catastrophes of the scalar field $f$. There are three sets of parameters here, of which the $\alpha_i$ are the most important, as these unfold the catastrophe as they vary about $(\alpha_1,...,\alpha_r)=(0,...,0)$. The constants $\tau_i$ and $\lambda_i$ for $i=2,...,n$ are required to be non-zero for the equations $0=\v F=\op B_1=...=\op B_r$ to be solvable, and they determine the values of the $\op G_{r,k_1..k_{r-1}}$ to be defined below. The expression \cref{primary} is called the {\it primary form} of the catastrophe in \cite{j22cat}, and is similar to Arnold's expressions for the {\it principle family of class $A_{r+1}$} for singularities of vector fields from \cite{a93v}. The extent to which this constitutes a local model or `normal form' of a codimension $r$ underlying catastrophe, such that $\v F$ can be transformed into such an expression locally, will be discussed in a forthcoming work \cite{jc23tbprocedure}. 

More completely, the following was proposed in \cite{j22cat}. 
  
\begin{definition}\label{def:sings}
A vector field $\v F:\mathbb R^n\times\mathbb R^r\to\mathbb R^n$ exhibits an {\bf underlying catastrophe} of codimension $r$ if it has a point $(\v x_*,\alphab_*)\in \mathbb R^n\times\mathbb R^r$ at which $\v F=0$ and
\begin{align}\label{singr}
&\op B_1=...=\op B_{r}=0\;.
\end{align} 
We say the catastrophe is {\bf full} if non-degeneracy conditions
	\begin{align}\label{singdeg}
	\op G_{r,k_1...k_{r-1}}\neq0\;,
	\end{align}
	hold at $(\v x_*,\alphab_*)$ for every $k_j\in\cc{1,...,n}$, $j=1,...,r-1$. The functions $\op B_i$ and $\op G_{r,k_1...k_{r-1}}$ are defined in \cref{sec:bg} \cref{def:bg} (and outlined more informally below in \cref{Br3}-\cref{Gr3}).  
	Following Thom's elementary catastrophes we call these underlying catastrophes the {\it fold} ($r=1$), {\it cusp} ($r=2$), {\it swallowtail} ($r=3$), etc.
\end{definition}

The functions $\op G_{r,k_1...k_{r-1}}$ are a set of extended determinants whose non-vanishing ensures non-degeneracy of the catastrophe and solvability of the conditions \cref{singr}. They are more numerous that the $\op B_i$s but it is straightforward to calculate them and just verify that they are nonzero. First, note that in defining the $\op B_i$ determinants, 
		\begin{align}\label{Br3}
		\op B_1&=\abs{\nabla\bb{ \hspace{0.04cm}f_1\hspace{0.04cm}, f_2...,  f_n}}\;,\nonumber\\
		\op B_2&=\abs{\nabla\bb{ \op B_1, f_2...,  f_n}}\;,\\  
		\op B_3&=\abs{\nabla\bb{ \op B_2, f_2...,  f_n}}\;,\;\;...\nonumber
		\end{align}
we have chosen to replace the first component of $\v F$ each time, but one could choose to replace one of the other components. Let $\op B_{2,k_1}$ be the alternative to $\op B_2$ with $\op B_1$ instead place in the $k_1^{th}$ component of $\v F$, and so iteratively let each $\op B_{i,k_1...k_{i-1}}$ be an alternative to $\op B_i$ with $\op B_{i-1,k_1...k_{i-2}}$ placed in the $k_{i-1}^{th}$ component of $\v F$, giving
		\begin{align}
			\op B_{2,k_1}&=\abs{\frac{\partial(f_1,...,f_{k_1-1},\op B_1,f_{k_1+1},...,f_n)}{\partial(x_1,...,x_n)}}\;,	\nonumber\\
			\op B_{3,k_1k_2}&=\abs{\frac{\partial(f_1,...,f_{k_2-1},\op B_{2,k_1},f_{k_2+1},...,f_n)}{\partial(x_1,...,x_n)}}\;, \quad...
		\end{align}
		and so on. Hence identifying $\op B_{2,1}\equiv\op B_2$, $\op B_{3,11}\equiv\op B_{3}$, etc. brings us back to the choices \cref{Br3}. The functions $\op G_{r_k}$ essentially establish an independence between these functions by evaluating extended determinants over $\mathbb R^n\times\mathbb R^r$ given by
		\begin{align}\label{Gr3}
			\op G_{1}&=\abs{\frac{\partial(f_1,...,f_{n},\op B_{1})}{\partial(x_1,...,x_n,\alpha_1)}}\;,\nonumber\\
			\op G_{2,k_1}&=\abs{\frac{\partial(f_1,...,f_{n},\op B_{1},\op B_{2,k_1})}{\partial(x_1,...,x_n,\alpha_1,\alpha_2)}}\;,\nonumber\\
			\op G_{3,k_1k_2}&=\abs{\frac{\partial(f_1,...,f_{n},\op B_{1},\op B_{2,k_1},\op B_{3,k_1k_2})}{\partial(x_1,...,x_n,\alpha_1,\alpha_2,\alpha_3)}}\;,\quad ...
		\end{align}
These provide the non-degeneracy conditions to be evaluated in \cref{singdeg}. Again, although these $\op G_{i,k}$ are numerous, one need only check that they are non-zero. For the primary forms \cref{primary} they extend the familiar non-degeneracy conditions for catastrophes of a scalar function $f$. 

The key to underlying catastrophes is that the conditions \cref{singr} typically give a set of $r$ algebraic equations which, along with $\v F=0$ and provided \cref{singdeg} hold, are solvable in $n$ variables and $r$ parameters. Were we to attempt to find a point in a system where all of the possible $\op B_{i,k_1...k_{i-1}}$ vanished we would have roughly $n^r$ conditions, requiring a system to have this many parameters to satisfy them in general. The complete classification of singularities and bifurcations contain many more cases, which in general are defined by more conditions still, the number of them growing superfactorially (as factorials of factorials, as we will show) in $n$ and $r$. The underlying catastrophes reduce our attention, by satisfying \cref{singdeg}, to cases with the minimal number of conditions $n+r$, and provides the simple determinant conditions \cref{singr} to find them.

We will first give a suggestion of the practical possibilities of these conditions with an example applying underlying catastrophes to a reaction-diffusion equation in \cref{sec:pde}. 
We then give more formal statements and general expressions for these `\BG' determinants in \cref{sec:bg}, recapping from \cite{j22cat}. 
To help understand how underlying catastrophes relate to established bifurcation theory, we will show that we can locate the conditions $\op B_i=0$ within the Thom-Boardman classification of singularities, treating the function $\v F$ as a general mapping rather than specifically a vector field. We outline the Thom-Boardman scheme in \cref{sec:B}, and show how the vast number of calculations involved (which we enumerate in \cref{sec:count}), reduce to the \BG~determinants of underlying catastrophes for corank 1 singularities. The proof of this result is instructive, so we explore it in some length in \cref{sec:TBBproof}, ending with some final remarks in \cref{sec:conc}.

\section{Example: a reaction-diffusion catastrophe}\label{sec:pde}

Let us write a reaction-diffusion equation for two species $A$ and $B$ on a spatiotemporal domain $(t,x,y)$. The usual introduction to Turing patterns starts from equations with a linear reaction term, some
\begin{align}\label{pde0}
\sfrac{\partial\;}{\partial t}A&=\nabla^2d_1A+\mu_1A+\mu_2B\;,\nonumber\\
\sfrac{\partial\;}{\partial t}B&=\nabla^2d_2B+\nu_1A+\nu_2B\;,
\end{align}
with reaction coefficients $\mu_i,\nu_i$, and diffusion constants $d_i$. Near a steady state one assumes that wave-like  solutions permit us to replace the derivative $\nabla^2$ by a factor $-k^2$. 

At the end of his seminal paper \cite{turing52}, Turing noted that the linear reaction rates in \cref{pde0} were valid for a system just leaving its homogeneous state, but more commonly one would expect to see a system passing from one pattern to another, 
and the study of such non-linear reaction-diffusion has continued intensively since, but let us revisit the problem using catastrophes. Assuming we can replace $\nabla^2$ by the factor $-k^2$ (of course this does not hold exactly in a nonlinear system, but we employ it here just as a preliminary investigation of nonlinear perturbations), take an example 
\begin{align}\label{pde00}
\sfrac{\partial\;}{\partial t}A&=-k^2d_1A+\mu_1A+\mu_2B+\alpha_1B^2+\alpha_2B^3\;,\nonumber\\
\sfrac{\partial\;}{\partial t}B&=-k^2d_2B+\nu_1A+\nu_2B+\gamma_1A^2+\gamma_2A^3\;.
\end{align}
This is motivated by the form of various equations from Turing's own planned follow-up work on phyllotaxis \cite{dawesTuring,turingcol}, similar to the Swift-Hohenberg model \cite{sh77}, and various other models of biological potentials, chemical reactions, and crystal growth, such as \cite{j21fahad,wehner20,turzi09,thiele21,nosov16,kreusser20,winfree91} for example. We specifically choose nonlinear terms such that the righthand-side of \cref{pde00} is simple enough to analyze by hand, but is not a gradient field (so that something beyond standard catastrophe theory is required), and any nonlinear terms (including mixed terms like `$AB$') may be added in general. 

It will be convenient to make a linear transformation $(u,v)=(A+\sfrac{\gamma_1}{3\gamma_2},B+\sfrac{\alpha_1}{3\alpha_2})$ to remove the quadratic term. Let $\mu_i=\nu_i=0$ to simplify further (since we will still be left with a linear term from the diffusion), and let $\alpha_2=\gamma_2=-1$, then tidy up constants by defining $\alpha=\sfrac13\alpha_1^2$, $\gamma=\sfrac13\gamma_1^2$, $\beta=\sfrac13\gamma_1d_1-\sfrac2{27}\alpha_1^3$, $\delta=\sfrac13\alpha_1d_2-\sfrac2{27}\gamma_1^3$, $k_i=k^2d_i$, leaving us with
\begin{align}\label{pde2}
\sfrac{\partial\;}{\partial t}u&=-k_1u-\beta-\alpha v-v^3\;,\nonumber\\
\sfrac{\partial\;}{\partial t}v&=-k_2v-\delta-\gamma u-u^3\;,
\end{align}
for some diffusion constants $k_1,k_2,$ and reaction constants $\alpha,\beta,\gamma,\delta$. 

Our purpose now is to use this to illustrate the finding of underlying catastrophes. We will not explore the reaction diffusion problem thoroughly in any specific context, that is left to more exhaustive future work. For ease of discussion let us say that $u(t,x,y)<0<v(t,x,y)$ indicates $v$ (i.e. $B$) is dominant over $u$ (i.e. $A$), and vice versa, at some given $(t,x,y)$. 

Let us get some basic intuition for this system's dynamics, starting with its homogeneous steady states. This is purely for illustration so will not be complete. 

To look at the homogeneous system, set the diffusion constants to $k_1=k_2=0$. The righthand side is then divergence-free, equal to $-\nabla\times\phi$ with $\phi=\delta u+\beta v+\hf(\gamma u^2+\alpha v^2)+\sfrac14(u^4+v^4)$. At $\alpha=\beta=\gamma=\delta=0$ the righthand side is $(-v^3,-u^3)$, with a highly degenerate saddlepoint at the origin. 

Perturbing any of the coefficients can cause this to degenerate into up to 9 steady states. Diffusion given by $k_1,k_2>0$, for instance, stabilizes the origin, creating an attractor where the species $A$ and $B$ are `balanced' with concentrations $u=v=0$. Examples with and without diffusion are shown in \cref{fig:9pole}: without diffusion in (a) the origin is a saddlepoint, but with diffusion in (b) the origin becomes attracting. 

\begin{figure}[h!]\centering
\includegraphics[width=0.35\textwidth]{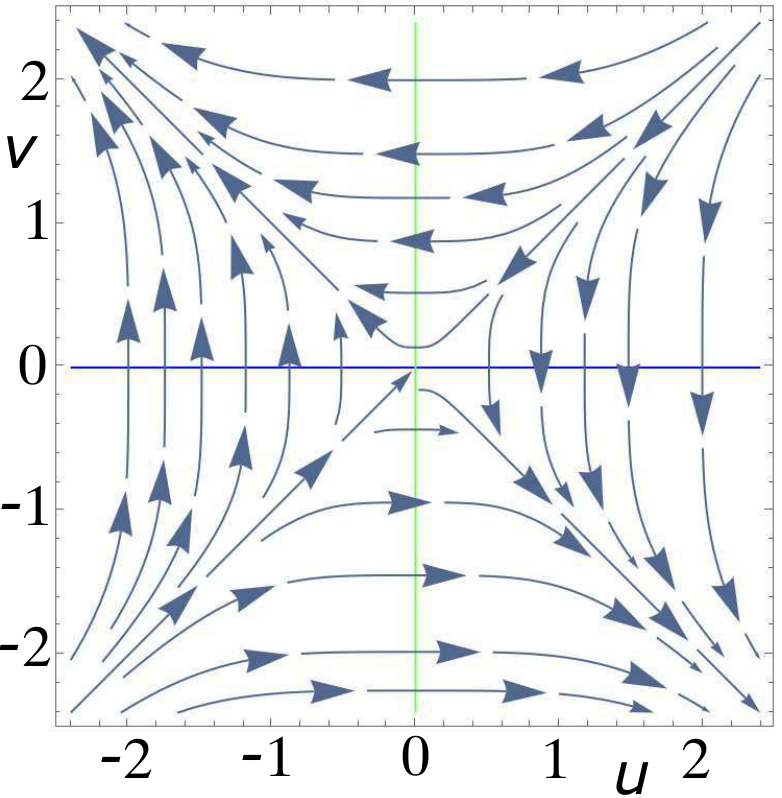}\hspace{1cm}
\includegraphics[width=0.35\textwidth]{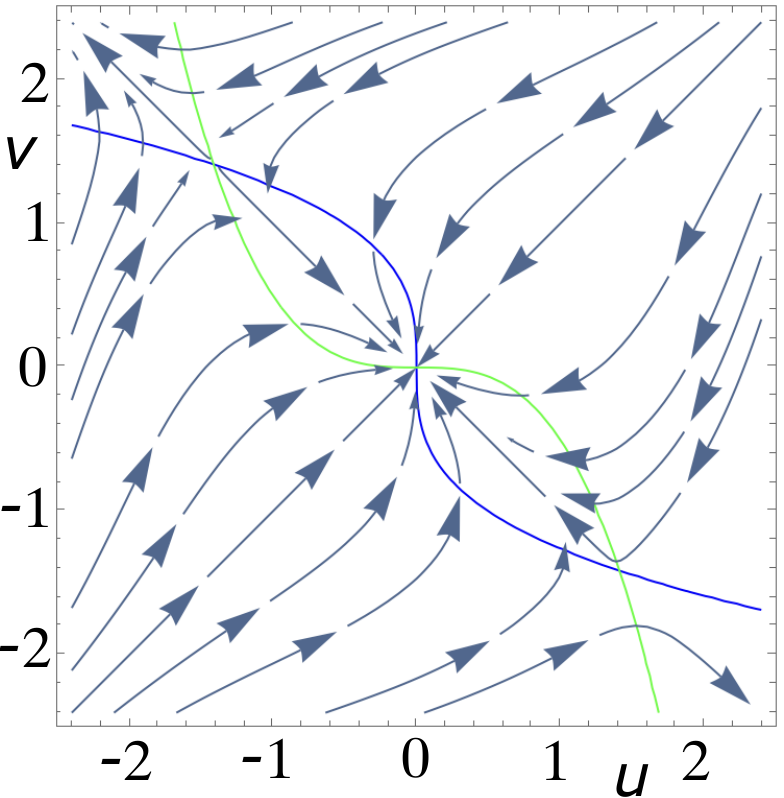}
\vspace{-0.3cm}\caption{\small\sf Flow in the $(u,v)$ plane with $\alpha=\beta=\gamma=\delta=0$. Left: with zero diffusion $k_1=k_2=0$ there is one saddle steady state at the origin. Right: with diffusion $k_1=k_2=2$ there is an attracting node at the origin, now surrounded by two saddles. }\label{fig:9pole}\end{figure}

For example, if we take $\beta=\delta=0$ with $\alpha>0$ and $\gamma>0$, the concentration steady states are given by $(u_{ij},v_{ij})=(i\sqrt{-\gamma},j\sqrt{-\alpha})$ with $i,j=-1,0,+1$. The states $(\pm\sqrt{-\gamma},0)$ and $(0,\pm\sqrt{-\alpha})$ are centres, therefore surrounded by states of oscillatory concentrations, and these are stabilized to the steady state by small diffusion $k_1,k_2>0$. If $\alpha$ and $\gamma$ have opposing signs, then only $(0,0)$ is a centre, again stabilized by small diffusion. These are shown with small diffusion in \cref{fig:poles}. 

\begin{figure}[h!]\centering
\includegraphics[width=0.35\textwidth]{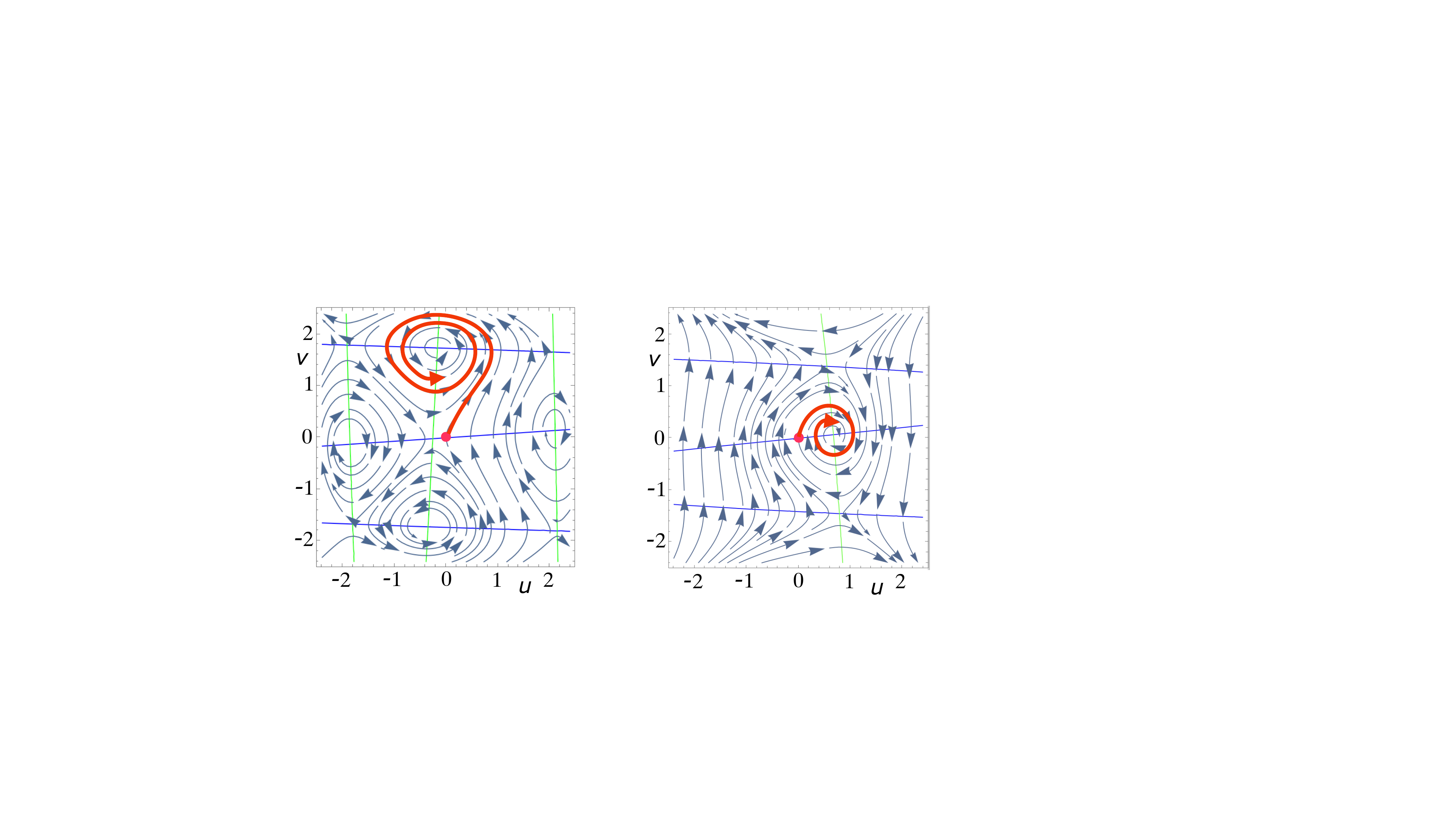}\hspace{1cm}
\includegraphics[width=0.35\textwidth]{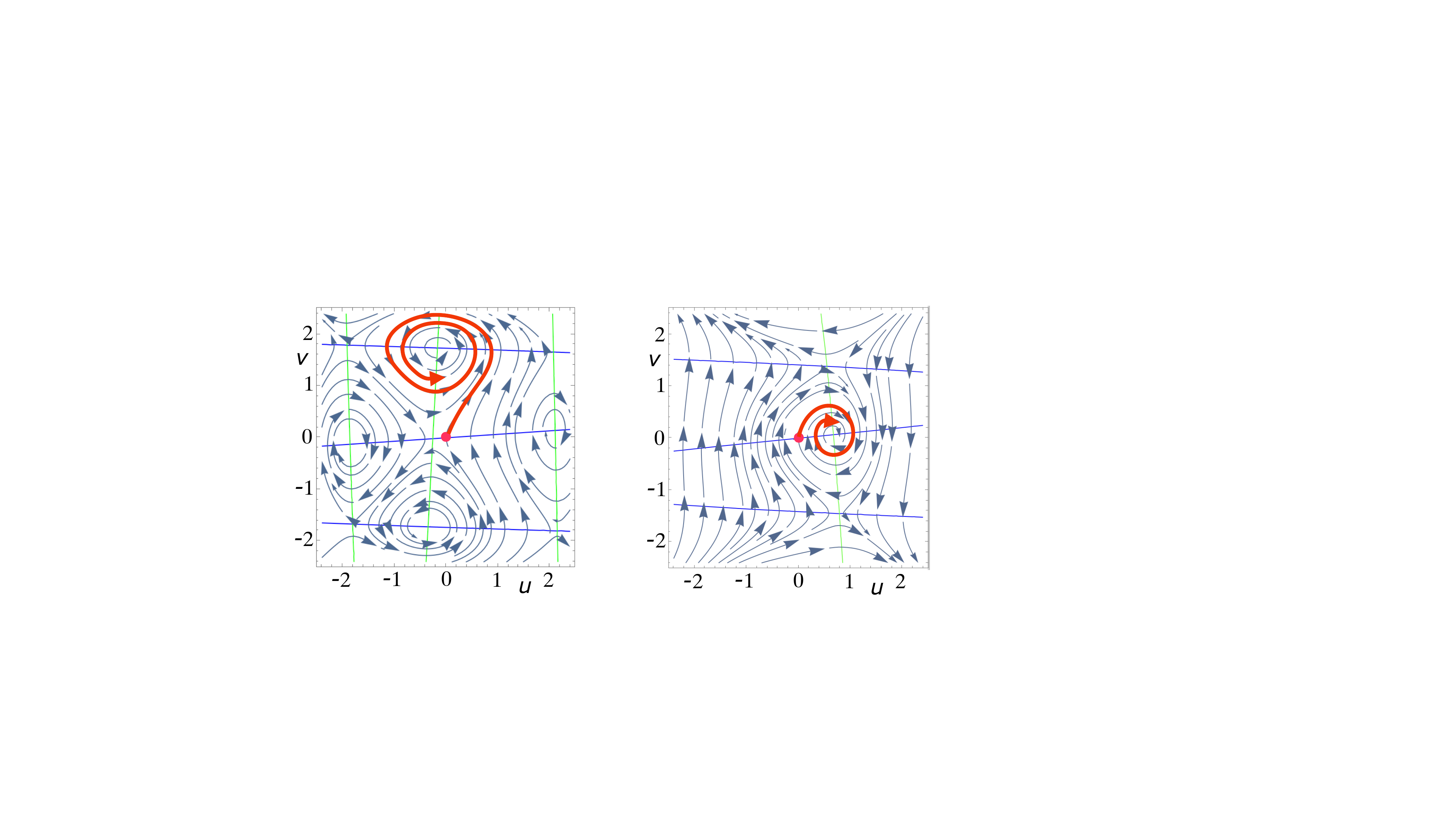}
\vspace{-0.3cm}\caption{\small\sf Flow in the $(u,v)$ plane, $k_1=k_2=0.2$, $\beta=0$, $\delta=-1$, $\alpha=-3$, and left $\gamma=-3$, right $\gamma=1$. Left: an initially balanced state (initial condition at the origin) would evolve to the state with $v$ dominant. Right: an initially balanced state would remain close to balance, gaining a small $u$ dominance. Solution from the balanced state shown in red.}\label{fig:poles}\end{figure}

Clearly there are various other parameter regimes that must be considered, with different numbers and stabilities of the steady states, so for a more complete picture we turn to their catastrophes. 

First calculate the determinants $\op B_i$ from \cref{def:sings} to the righthand side of \cref{pde2}, i.e. taking
\begin{align*}
\v F&=-(\;k_1u+\beta+\alpha v+v^3\;,\;k_2v+\delta+\gamma u+u^3\;)\;.
\end{align*}
The calculations are straightforward, and the first four evaluate as
\begin{align}
\op B_1=&k_1k_2-ac\;,\\
\op B_{2}=&6(k_2ua-vc^2)\;,\\
\op B_{3}=&6(18k_2uvc-k_2^2a-c^3)\;,\\
\op B_{4}=&72k_2(3uc^2-2k_2vc-9k_2u^2v)\;,\\
{\rm where} &\quad\;\; a=\alpha+3v^2\;,\quad c=\gamma+3u^2\;.\nonumber
\end{align}

Where these vanish we will find sets of fold, cusp, swallowtail, and butterfly catastrophes, and one may solve to find their roots easily by computer. Or, more usefully, we can analyze them using the kind of parameterizations very beautifully expounded for the elementary catastrophes in \cite{ps96}. We do this by applying successive $\op B_r$ conditions to express the catastrophe sets in the space of the coefficients $(\alpha,\beta,\gamma,\delta)$, taking the variables $(u,v)$ as parameters. 

First parameterize the family of steady states, where \cref{pde2} vanishes, as 
\begin{align}\label{ss}
\beta(u,v)&=-k_1 u-\alpha v-v^3\;,\nonumber\\
\delta(u,v)&=-k_2 v-\gamma u-u^3\;.
\end{align}
One way to think of this is as a 4-dimensional manifold obtained by mapping from $(u,v;\alpha,\beta,\gamma,\delta)$ into $(\alpha,\beta,\gamma,\delta)$. 

Pairs of steady states will collide to form folds where $\op B_1$ vanishes, which we can parameterize as 
\begin{align}\label{b1}
0=\op B_1&=k_1k_2-(\alpha+3v^2)(\gamma+3u^2)\nonumber\\
\Rightarrow\qquad&
\alpha(u,v)=\sfrac{k_1k_2}{\gamma+3u^2}-3v^2\;.
\end{align}
These are a little easier to visualize, in the sense that they constitute a 3-dimensional manifold in the space of $(\alpha,\beta,\gamma,\delta)$, traced out by the functions \cref{ss} and \cref{b1} as $(u,v)$ and $\gamma$ vary. 

Different branches of folds will collide at cusps, given by
\begin{align}\label{b2}
0=\op B_{2}&\propto \sfrac{k_1k_2^2u}{\gamma+3u^2}-v(\gamma+3u^2)^2\nonumber\\
\Rightarrow\qquad&
\gamma(u,v)=(k_1k_2^2u/v)^{1/3}-3u^2\;, 
\end{align}
a 2-dimensional surface in the space of $(\alpha,\beta,\gamma,\delta)$ defined by the functions \cref{ss} to \cref{b2}, parameterized by $(u,v)$. 

These collide to form swallowtails. To find these it is useful to work in new variables $p=v/u$ and $q=uv$, and take $p$ as a parameter. Dropping some multiplicative constants from our calculation of $\op B_3$, we obtain
\begin{align}\label{b3}
0=\op B_{3}
&\propto 18k_2uv(k_1k_2^2u/v)^{1/3}-\sfrac{k_1k_2^3}{(k_1k_2^2u/v)^{1/3}}-k_1k_2^2u/v\nonumber\\
\Rightarrow\qquad&
q(p)=\sfrac1{18}(k_1k_2)^{1/3}(k_2^{1/3}p^{2/3}+k_1^{1/3}p^{-2/3})\;,
\end{align}
where we let 
\begin{align}
u=(q/p)^{1/2}\;,\quad v=(pq)^{1/2}\;,\quad p=v/u\;,\quad q=uv\;.
\end{align}
The set $q(p)$ together with \cref{ss} to \cref{b2} gives us a curve of swallowtails in the space of $(\alpha,\beta,\gamma,\delta$). 

Finally these collide to form butterflies, given by
\begin{align}\label{b4}
0=\op B_{4}&\propto 3u(\gamma+3u^2)^2-2k_2v(\gamma+3u^2)-9k_2u^2v\;,\nonumber\\
\Rightarrow\qquad
0
&=-5k_1^{2/3}k_2^{1/3}p^{-2/3}+5k_1^{1/3}k_2^{2/3}p^{2/3}\nonumber\\
\Rightarrow\qquad
p&=(k_1/k_2)^{1/4}\;,\quad q=\sfrac19(k_1k_2)^{1/2}\;.
\end{align}
Hence butterflies are found at isolated points in the space of $(\alpha,\beta,\gamma,\delta)$ where
\begin{align}
&
u=\pm\sfrac13(k_1k_2^3)^{1/8}\;,\quad
\beta=\mp\sfrac{16}{27}(k_1^3k_2)^{3/8}\;,\quad 
\alpha=\sfrac23(k_1^3k_2)^{1/4}\;,\nonumber\\&
v=\pm\sfrac13(k_1^3k_2)^{1/8}\;,\quad
\delta=\mp\sfrac{16}{27}(k_1k_2^3)^{3/8}\;,\quad
\gamma=\sfrac23(k_1k_2^3)^{1/4}\;.
\end{align}

In \cref{fig:bdplane} we plot the fold surface defined by \cref{ss}-\cref{b1} in the parameter space $(\beta,\delta)$ for some example values of $(\alpha,\gamma)$. 

\begin{figure}[h!]\centering
\includegraphics[width=0.95\textwidth]{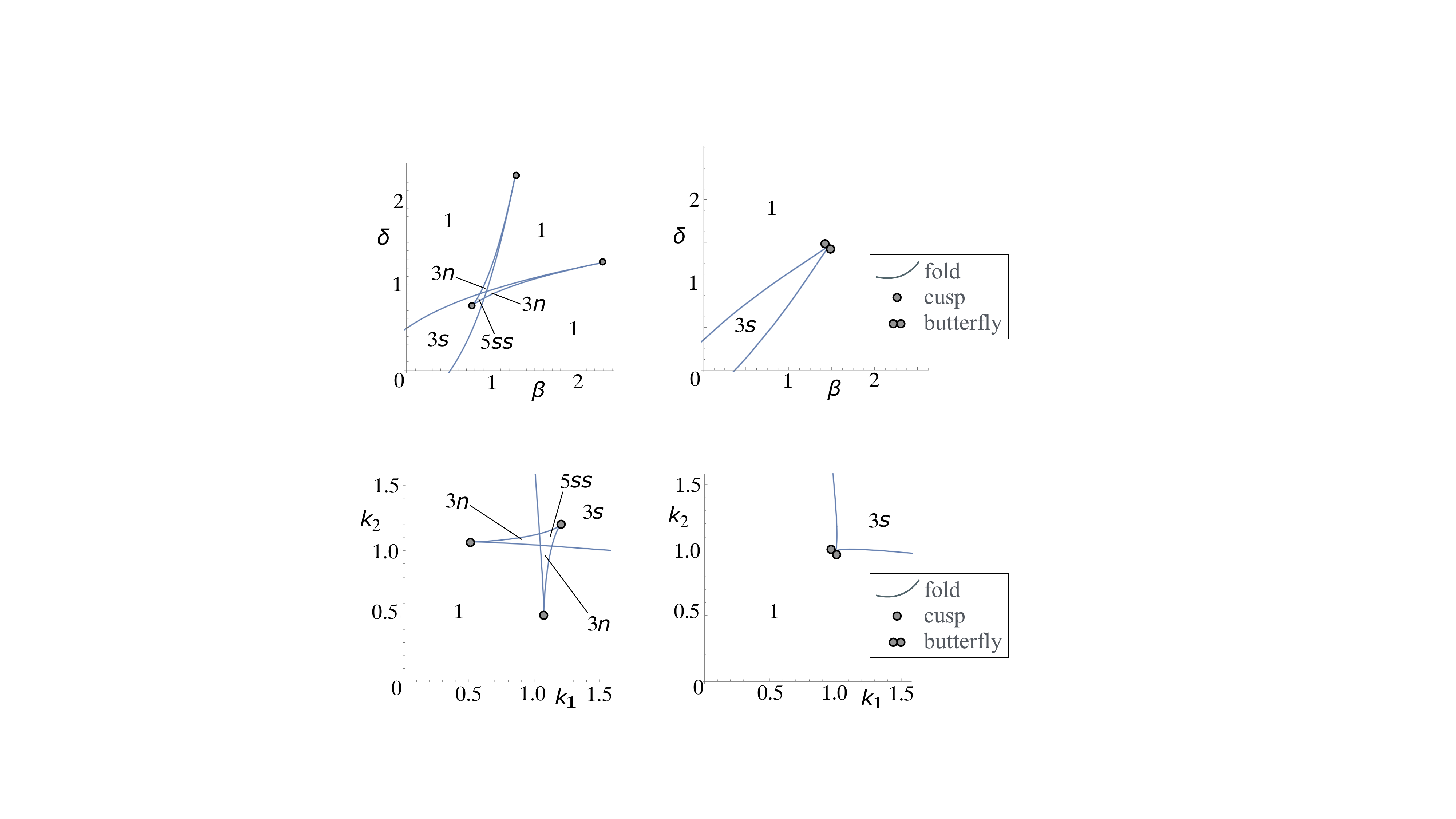}
\vspace{-0.3cm}\caption{\small\sf Stability in the $(\beta,\delta)$ plane, $k_1=k_2=1$. Left: $\alpha=\gamma=1/5$. Right: $\alpha=\gamma=2/3$. Numbers indicate how many steady states exist in each region, and letters indicate typical behaviour from an initially balanced concentration $u=v=0$: `s' indicates the system evolves to a unique stable steady state with $u$ or $v$ in slight dominance either side of $\beta=\delta=1$, `$ss$' indicates the system will evolve to one of two stable steady states where either $u$ or $v$ is dominant, `$n$' indicates there is a stable steady state that is not reachable from the origin so the concentrations diverge, and no letter indicates no stable steady states so the concentrations diverge. }\label{fig:bdplane}\end{figure}

In \cref{fig:kplane} we plot in the folds in the diffusion plane $(k_1,k_2)$ for some example values of $(\alpha,\beta,\gamma,\delta)$. The folds form the boundaries between regions that have different numbers of steady states as indicated, leading to different stable configurations for the system as described in the captions.

\begin{figure}[h!]\centering
\includegraphics[width=0.95\textwidth]{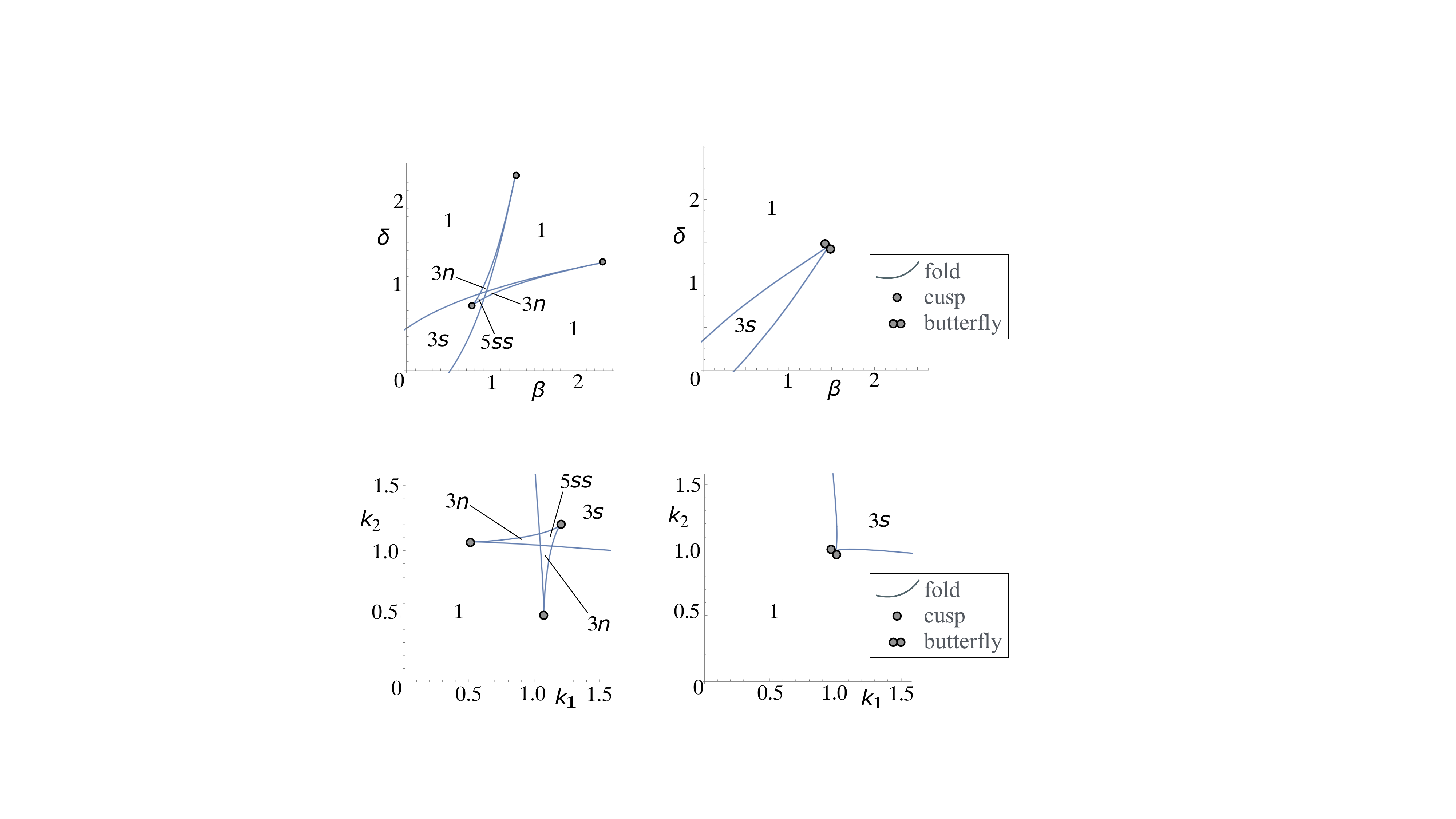}
\vspace{-0.3cm}\caption{\small\sf Stability in the $(k_1,k_2)$ diffusion parameter plane, with labels similar to \cref{fig:bdplane}. Left: $\alpha=\gamma=1/5$, $\beta=\delta=1$, the steady states lie at low concentrations. Right: $\alpha=\gamma=2/3$, $\beta=\delta=-16/27$, the steady states lie at high concentrations. }\label{fig:kplane}\end{figure}

These are just cursory examples to illustrate the concept. By giving appropriate initial data and boundary conditions for the differential equations, we could go on to simulate the mixing of the two-species in the $(x,y)$ plane at different times, or to study wave propagation through these concentrations. 
In \cite{j21fahad}, for example, underlying catastrophes of homogeneous steady states are used to determine the existence of heteroclinic connections that represent `wave-pinning' patterns in cell polarity formation in plants. A number of other works have attempted to relate elementary catastrophes to the behaviours of partial differential equations. 
In \cite{wehner20}, the signature shapes of a swallowtail are seen in bifurcation diagrams of an oxidation reaction, and in \cite{wehner23}, butterfly sets are then shown in reaction diffusion by limiting to a scalar one species problem, and a similar treatment places cusps in a crowd density problem in \cite{cuspjam10}. 
A nice discussion of the problem of detecting high codimension bifurcation points in PDEs can be found in \cite{kreusser20}, before scalar catastrophe theory is used to develop numerical methods for the purpose. 
An attempt is made to relate catastrophes to the phases of liquid crystals in \cite{turzi09} by focussing on a scalar term like the free energy. 

Typically in such studies it has been necessary to appeal to scalar functions to investigate the occurrence of catastrophes. The example above takes elements from these studies, and tries to show how catastrophes could be applied to them without reducing to scalar quantities. Hopefully this merely hints at the possible applications, and deeper study will require more detailed and specialised analysis with particular applications in mind.

Finally, but importantly, when deriving the catastrophes above we should have also calculated the non-degeneracy conditions $\op G$ from \cref{Gri} at each stage. Though straightforward, these are more numerous than the $\op B$ determinants, and best done with a computer. The relevant conditions are $\op G_1$ for the fold, $\op G_{2,1}$ and $\op G_{2,2}$ for the cusp, four determinants $\op G_{3,ij}$ for the swallowtail and eight determinants $\op G_{4,ijk}$ for the butterfly, where $i,j,k$ take values 1 and 2 (because the dimension of the system is 2). 

We will not give all of these calculations, but we find they are indeed non-vanishing for nonzero $k_1,k_2$. 
At the butterfly for example we find
\begin{align}
\op G_{4,ijk}=103680(-1)^{i+k}(k_1k_2)^4(k_1/k_2)^{\sfrac34(k+2j+3i-9)}\
\end{align}
which is non-vanishing if and only if $k_1,k_2\neq0$, i.e. with diffusion.

The fact that these $\op G$ conditions are non-zero means the system is `full' according to \cref{def:bg}, meaning firstly that we can solve the conditions $\op B_i=0$, and moreover that these solutions are unique, i.e. we could not obtain different results by instead using the determinants from \cref{Bri}. Were any of these $\op G$s vanishing, the associated catastrophe would not be `full', and such cases require further case-specific study. 

Not being full may imply that the point has some higher degeneracy, for example we indeed find that $\op G_1$ vanishes at the set where the folds collide and become higher order catastrophes, i.e. the cusps, and the $\op G_{2,1}=\op G_{2,2}=0$ at the set where the cusps are degenerate and become swallowtails, etc.  

Not being full need always not imply an actual degeneracy of the system, however, as they may indicate the system has essentially become lower dimensional, so we just need to adjust the dimension of our calculations. If $k_1$ and $k_2$ vanish here then all of the $\op G$ conditions will vanish, and we can clearly see that the catastrophe calculations above all break down. What has actually happened in this case is that the righthand side of \cref{pde2} is now divergence-free, and expressible as the derivative of a scalar potential, to which we could apply elementary catastrophe theory instead. 

\section{Underlying catastrophes: the \BG~conditions}\label{sec:bg}

The \BG~determinants from \cref{def:sings} in \cref{sec:intro} are straightforward to evaluate even for large $n$ and $r$, so it is worth writing general expressions for any codimension. 

Strictly speaking our interest is only local, so we should refer to {\it germs} rather than {functions} $\v F$ or vector fields, but we are not really adding at all to the rigorous theory of bifurcations or singularities here, only developing practical methods to locate them, so we will introduce such technicalities only as needed. 

Recall that $\v F:\mathbb R^n\times\mathbb R^r\to\mathbb R^n$ is taken to be a smooth vector field with components $\v F=(f_1,f_2,...,f_n)$, and dependence written as $\v F=\v F(\v x,\alphab)$ for a variable $\v x\in\mathbb R^n$ and parameter $\alphab\in\mathbb R^r$.

We can give more general formulae for the \BG~determinants \cref{Br3}-\cref{Gr3} as follows. 
\begin{definition}\label{def:bg}
		For $i=1,2,...$ we first define
		\begin{align}\label{Br}
		\op B_{i}&=\abs{\frac{\partial(\op B_{i-1},f_2,...,f_{n})}{\partial(x_1,...,x_n)}}\;,
		\end{align}
		and for $i=0$ define $\op B_0\equiv f_1$. 
To generalize this to allow placing $\op B_{i_1}$ in the $k^{th}$ component of $\v F=(f_1,...,f_n)$, denote the vector $\v F$ with its $k^{th}$ row replaced by some scalar $V$ by
		\begin{align}\label{slash}
		(f_1,...,f_n)\backslash^kV=(f_1,...,f_{k-1},V,f_{k+1},...,f_n)\;,
		\end{align}
		then let
		\begin{align}\label{Bri}
		\op B_{i,K(i-1)}&=\abs{\frac{\partial(f_1,...,f_n)\backslash^{k_{i-1}}\op B_{i-1,K(i-2)}}{\partial(x_1,...,x_n)}}\;,
		\end{align}
		where $K(i)$ denotes an $i$-length string $K(i)=k_1...k_{i}$ of symbols $k_j\in\cc{1,...,n}$, for $j=1,...,i$. 
		We define the case $i=1$ as $\op B_{1,K(0)}=\op B_1$, and note that $\op B_{i,1...1}\equiv\op B_i$. 
		Finally let 
		\begin{align}\label{Gri}
		\op G_{r,K(r-1)}&=\abs{\frac{\partial(f_1,...,f_{n},\op B_{1},\op B_{2,K(1)},...,\op B_{r,K(r-1)})}{\partial(x_1,...,x_n,\alpha_1,...,\alpha_{r})}}\;,
		\end{align}
		defining the case $r=1$ as $\op G_{1,K(0)}=\op G_1=\abs{\frac{\partial(f_1,...,f_{n},\op B_{1})}{\partial(x_1,...,x_n,\alpha_1)}}$. 
\end{definition}

		A crucial assumption was stated before \cref{def:sings}, namely that any set of $n-1$ of the gradient vectors $\sfrac{\partial f_i}{\partial x}$ are linearly independent. 
		A useful way to express this property is with the following. 
\begin{definition}\label{def:subrank}
Let the {\bf subrank} of a function $\v F=\bb{f_1,...,f_n}$ be the least number of dimensions spanned by any $n-1$ of the gradient vectors $\sfrac{\partial f_1}{\partial\v x}$, ..., $\sfrac{\partial f_n}{\partial\v x}$, or using the notation \cref{slash}, 
\begin{align}
\subrank\v F
&=\min\bb{\operatorname{rank}\sq{(\sfrac{\partial f_1}{\partial\v x},...,\sfrac{\partial f_n}{\partial\v x})\backslash^j\v 0\;,\;\;j=1,...,n}}\;.
\end{align} 
\end{definition}
\noindent 
\Cref{def:sings} then applies to a vector field on $\mathbb R^n$ for which 
\begin{align}\label{subrankfull}
\subrank\v F=n-1\;.
\end{align}

		The property of being {\it full} in \cref{def:sings} guarantees that the conditions \cref{Br} are solvable to find $(\v x_*,\alphab_*)\in\mathbb R^n\times\mathbb R^r$, 
		as we discuss in \cref{sec:Gfull}. 
		
If \cref{subrankfull} is not satisfied then the determinants in \cref{Br} are not sufficient to locate a catastrophe, and one must instead solve all $\sfrac{1-n^r}{1-n}$ conditions $\op B_{i,K(i-1)}=0$, and there is no reason to expect these to be solveable in general unless a system has this many parameters.  

Even if \cref{subrankfull} is not satisfied, it is sometimes possible to trivially reduce the dimension $n$ of the system until \cref{subrankfull} {\it is} satisfied. We can illustrate this using the primary form of a catastrophe from \cref{primary}. 
The vector field in \cref{primary} has subrank $n-1$ provided $k_i\neq0$ for all $i=2,...,n$, such that at a singularity we have $\sfrac{\partial f_1}{\partial\v x}=\bb{0,k_2,...,k_n}$. 
If the subrank is less than $n-1$, however, then we can remove redundant dimensions and apply \cref{def:sings} to the reduced system. For example if $k_i=0$ for all $i>2$ in \cref{primary}, then $\subrank\v F(0)=1$ so we cannot use \cref{def:sings}, but if we reduce it to a planar system $\v F=\bb{f(x_1,\alphab)+k_2x_2,\lambda_2x_2}$ then \cref{def:sings} applies. 


\section{\BG~determinants and Boardman symbols}\label{sec:B}

The idea set out in \cite{j22cat} essentially makes three propositions, namely that for a vector field $\v F:\mathbb R^n\times\mathbb R^r\to\mathbb R^n$  in variables $(x_1,...,x_n)$ and parameters $(\alpha_1,...,\alpha_r)$:
\begin{enumerate}
\item[(i)] Any local bifurcation point of $\v F$, namely where $\v F=0$ and its Jacobian has less than full rank, has an {\it underlying catastrophe}. 
\item[(ii)] An {\it underlying catastrophe} is a degenerate zero of $\v F$ that bifurcates under perturbation into a perturbation-dependent number of zeros. In particular, a corank 1 {\it underlying catastrophe} 
of codimension $r$ would break up under perturbation into up to $r+1$ zeroes of $\v F$.
\item[(iii)] For certain {\it primary forms} $\bb{f(x_1,\alpha_1,...,\alpha_r),x_2,...,x_n}$ in terms of a scalar polynomial $f:\mathbb R\times\mathbb R^r\to\mathbb R$ (eq.\cref{primary} in \cref{sec:intro}), the conditions defining an underlying catastrophe reduce to the familiar defining conditions of Thom's {\it elementary catastrophes}. 
\end{enumerate}
These properties are the motivation for the term `underlying catastrophes', and we will show here the sense and conditions under which they are indeed true. 
The term `underlying' comes from (i)-(ii), namely that we are identifying bifurcations only by looking for places where zeroes of $\v F$ encounter singularities, without any regard for the vectorial nature of $\v F$, so in a sense we are looking only at the mapping `underlying $\v F$' (ignoring that it maps to a tangent space and this carries consequences for local stability). For cases of corank 1, giving us what Arnold classifies as series $A_r$ with “one zero eigenvalue and $(r-1)$-fold degeneracy in the nonlinear terms” \cite{a93v}, we will show here that the singularity classification reduces precisely to the conditions $\op B_i=0$. 
The use of the term `catastrophe' is then due to (iii), namely that these conditions then reduce further to the identifying conditions of the elementary catastrophes for the primary forms \cref{primary}. 

%

Ren\'e Thom introduced a general classification of singularities in \cite{thom55} for mappings $\v F:\mathbb R^m\to\mathbb R^n$. They were defined through the transversality of certain submanifolds defined by kernels of $\v F$ and its derivatives. John Boardman provided a more explicit construction in \cite{boardman67}, developed on by Mather and Morin \cite{mather73,morin75}, which can be expressed as a sequence of extended Jacobian matrices containing minors of successive derivatives of $\v F$, culminating in a convenient set of {\it Boardman symbols} for Thom's singularities. 

The Thom-Boardman singularity classification is less well known than perhaps it should be among mathematicians, and after half a century is still far less well utilised in applications than it could be. So let us begin with a slightly informal, methodological introduction to the Thom-Boardman procedure, before showing how the \BG~conditions not only fit into the Thom-Boardman theory, but vastly simplify it for a significant class of singularities. 
We will restrict our attention to $\v F:\mathbb R^n\to\mathbb R^n$ but the extension for $\v F:\mathbb R^m\to\mathbb R^n$ with $m\neq n$ is straightforward. 
Where possible we leave the more rigourous statements behind the theory to references. 

\bigskip
\subsection{Thom-Boardman singularities}\label{sec:TB}	

The Thom-Boardman classification makes heavy use of {\it ranks} and {\it minors} of certain Jacobian matrices. 
The $k\times k$ minors of a $p\times q$ matrix $M$ are the determinants of all $k\times k$ matrices taken from $M$ by omitting any $p-k$ rows and $q-k$ rows. 
A $p\times q$ matrix $M$ has rank $s$ if there exists an $s\times s$ minor of $M$ that is non-zero, but every larger minor {\it is} zero. 

Let $\nabla$ denote the derivative operator with respect to $\v x=(x_1,...,x_n)$. 
We will be dealing with vectors and matrices of differing sizes below, so for some $\v F=(f_1,...,f_n)$ and $\v G=(g_1,...,g_m)$, for any $n,m,$ let us be able to combine them as $(\v F,\v G)=(f_1,...,f_n,g_1,...,g_m)$, with Jacobians $\nabla(\v F,\v G)=(\nabla\v F,\nabla\v  G)=(\nabla f_1,...,\nabla f_n,\nabla g_1,...,\nabla g_m)$. We refer to each component $f_i$ or $g_i$ as occupying a {\it row} of $(\v F,\v G)$, and similarly each $\nabla f_i$ or $\nabla g_i$ occupies a {\it row} of $\nabla(\v F,\v  G)$. 

To define the {\it Boardman symbol} of a singularity we can define a sequence of extended Jacobians of $\v F$. 

Given $\v F=(f_1,...,f_n):\mathbb R^n\to\mathbb R^n$, first define the symbol $\Delta^1\v F=\bb{f_1,...,f_n,\det\nabla\v F}$, and also for $i_1>1$ define $\Delta^{i_1}\v F=\bb{\v F,m^{i_1}_1,m^{i_1}_2,...,}$, where $m^{i_1}_j$, $j=1,...,N_1$, are the $(n-{i_1}+1)\times(n-{i_1}+1)$ minors of $\nabla\v F$. 
 
Next define $\Delta^{i_2}\Delta^{i_1}\v F=\bb{\Delta^{i_1}\v F,m^{i_2}_{2,1},m^{i_2}_{2,2},...}$, where $m^{i_2}_{2,j}$ are the $(n-i_2+1)\times(n-i_2+1)$ minors of $\nabla(\Delta^{i_1}\v F)$, and proceed iteratively, next defining $\Delta^{i_3}\Delta^{i_2}\Delta^{i_1}\v F=\bb{\Delta^{i_2}\Delta^{i_1}\v F,m^{i_3}_{3,1},m^{i_3}_{3,2},...}$, where $m^{i_3}_{3,j}$ are the $(n-i_3+1)\times(n-i_3+1)$ minors of $\nabla(\Delta^{i_2}\Delta^{i_1}\v F)$, and so on, giving generally
\begin{align}
\Delta^{i_j}...\Delta^{i_1}\v F=\bb{\Delta^{i_{j-1}}...\Delta^{i_1}\v F,m^{i_j}_{j,1},m^{i_j}_{j,2},...}\;,
\end{align}
where $m^{i_j}_{j,k}$ for $k=1,...,N_j,$ are the $(n-i_j+1)\times(n-i_j+1)$ minors of $\nabla(\Delta^{i_{j-1}}...\Delta^{i_1}\v F)$. We give a formula for the number $N_j$ of minors in each stage of these calculations in \cref{sec:count}.


\begin{definition}\label{def:TBsymbol}
The {\it Boardman symbol} of $\v F$ at $\v x=0$ is the sequence $\tb=\tb_1,...,\tb_r,$ such that each $\nabla\Delta^{\tb_{j-1}}...\Delta^{\tb_1}\v F(0)$ has corank $\tb_j$ for $j=1,...,r$, (including that $\nabla\v F(0)$ has corank $\tb_1$), and the symbol $\tau$ is taken to terminate at $r$ such that $\tb_{r+1}=0$. 
\end{definition}

We call $\tb_j$ the $j^{th}$ Boardman symbol. 

Note that the symbols $\tb_1,...,\tb_r$, form a non-increasing sequence $\tb_1\ge\tb_2\ge...\ge\tb_r$. 
This means, for instance, that if the Jacobian of $\v F$ has corank one, then the Boardman symbol consists of a sequence of 1s only, and in this case we show, in the next section, that the minors $m_{j,k}^{i_j}$ are reducible to precisely the functions $\op B_{i,K(i-1)}$ from \cref{def:bg}.

The description above is usually made not in terms of functions $\Lambda^{\tb_r}...\Lambda^{\tb_1}\v F$, or even their germs (loosely their `local representations'), but in terms of the ideals (loosely `linear combinations') generated by the components of those germs, see e.g. \cite{gibson79,mond20}. But the procedure is equivalent, and since the ranks of ideals are actually defined by referring back to the ranks of the functions $\nabla\Lambda^{i_r}...\Lambda^{i_1}\v F$, the above seems both more direct and less unnecessarily technical, at least for the purposes of describing the calculations involved. A more technically complete presentation will be given in forthcoming work \cite{jc23tbprocedure}.

\bigskip
\subsection{The case $\tb=1,...,1$, and the $\op B$ determinants}\label{sec:morin}

Let us consider more closely what these minors look like if $\nabla\v F$ has corank one. In this case the Boardman symbols are just a sequence of symbols $\tb_1=...=\tb_r=1$, $\tb_{r+1}=0$, for some $r\ge1$. These are known as $\Sigma^1$ singularities or {\it Morin singularities} \cite{mond20,morin75}.

We will show below that, if we assume a codimension $r$ Morin singularity at some $\v x=\v x_0$, we can locate it by either:
\begin{enumerate}
\item[(a)] solving to find where the requisite minors $m^{\tb_j}_{j,k}$ associated with the Boardman symbols $\tb_1=...=\tb_r=1$ all vanish at some $\v x=\v x_0$. The number of these minors grows super-factorially with $r$. 
\item[(b)] solving the conditions $\op B_1=\op B_{2,k_1}=...=\op B_{r,k_1...k_{r-1}}=0$ for all $k_1,...,k_{r-1}\in1,...,n,$ at $\v x=\v x_0$. There are $\sum_{m=0}^{r-1}n^m=\sfrac{1-n^r}{1-n}$ of these conditions. 
\item[(c)] restricting to vector fields with $\subrank\v F=n-1$ (\cref{def:subrank}), and solving the conditions $\op B_1=...=\op B_r=0$ at $\v x=\v x_0$. There are only $r$ of these conditions, and the $\op G$ conditions (\cref{def:sings}) guarantee these have isolated roots. 
\end{enumerate}
In (a)-(b) the number of conditions can be expected to exceed the number of available parameters in a typical physical model, so they do not typically give a solveable system of equations. This in contrast to (c) which requires solving $r$ equations in $r$ parameters, so while the \BG~conditions cannot detect all possible singularities, they restrict attention to those that can be solved for in closed form.

To show this we have the following series of results, which we state, and then give their proofs afterwards. Recall the definitions of the determinants $\op B_i$ and $\op B_{i,K(i-1)}$ from \cref{Br}-\cref{Bri}. The statements ``all $K(i-1)$'' or ``all $\op B_{i,K(i-1)}$'' are shorthand for writing that the string $K(r-1)=k_1...k_{r-1}$ or function $\op B_{i,k_1...k_{i-1}}$ are taken for all $k_j=1,...,n$ and for $j=1,...,r-1$. 

Below we will want the dependence on the variable $\v x\in\mathbb R^n$ and parameter $\alphab\in\mathbb R^r$ to be explicit, so write
$$\v F=\v F(\v x;\alphab)\;,\qquad{\rm and}\qquad \op B_{i}=\op B_{i}(\v x;\alphab)\;.$$
The results below assume, for $\v F:\mathbb R^n\to\mathbb R^n$, that 
\begin{align}\label{0span}
\v F(0;0)=0\qquad{\rm and}\qquad \subrank\v F(0;0)=n-1\;.
\end{align}

First we will need to show that if one choice of $\op B_i$ vanishes, then all the permutations $\op B_{i,K(i-1)}$ vanish. This is given by the following. 
\begin{lemma}[Vanishing of $\op B_i$]\label{thm:Br}
	If $\op B_1(0;0)=0$ and $\op B_{2,j}(0;0)=0$ for some $j$, then $\nabla\op B_{2,k}(0;0)=0$ for all $k=1,...,n$. 
\end{lemma}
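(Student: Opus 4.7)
The plan is to exploit the Laplace expansion of $\op B_{2,k}$ along the row occupied by $\nabla\op B_1$, together with the rank-one structure of the adjugate of the Jacobian matrix $J$ of $\v F$ at the origin, which is forced by the subrank hypothesis \cref{0span}. Expanding along that row gives
\begin{align*}
\op B_{2,k} \;=\; \sum_{\ell=1}^{n} (\partial_\ell \op B_1)\,C_{k\ell}(J) \;=\; \bigl(\nabla\op B_1\cdot\operatorname{adj}(J)\bigr)_k\;,
\end{align*}
where $C_{k\ell}(J)$ is the $(k,\ell)$ cofactor of $J$. Because $\subrank\v F(0;0)=n-1$ and $\op B_1(0;0)=\det J(0;0)=0$, the rank of $J(0;0)$ equals $n-1$ exactly, so $\operatorname{adj}(J)(0;0)$ has rank one and factors as $\mu\lambda^{\!\top}$ with $\mu$ spanning $\ker J(0;0)$, $\lambda$ spanning $\ker J(0;0)^{\!\top}$, and every entry of $\mu$ and $\lambda$ nonzero. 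Substituting yields $\op B_{2,k}(0;0)=\lambda_k\,\bigl(\mu\cdot\nabla\op B_1\bigr)(0;0)$; the hypothesis $\op B_{2,j}(0;0)=0$ together with $\lambda_j\neq0$ then forces $\mu\cdot\nabla\op B_1(0;0)=0$, which places $\nabla\op B_1(0;0)$ into the $(n-1)$-dimensional subspace $V:=\operatorname{span}\{\nabla f_1(0;0),\dots,\nabla f_n(0;0)\}$ and simultaneously yields the auxiliary value-vanishing $\op B_{2,k}(0;0)=0$ for every $k$.

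The second, and principal, step is to upgrade value-vanishing to gradient-vanishing. Differentiating the Laplace expansion gives
\begin{align*}
\partial_m \op B_{2,k} \;=\; \sum_{\ell}\bigl[(\partial_m\partial_\ell\op B_1)\,C_{k\ell}(J) + (\partial_\ell\op B_1)\,\partial_m C_{k\ell}(J)\bigr]\;,
\end{align*}
and I would evaluate both sums at the origin using Jacobi's formula $\partial_\ell\op B_1=\operatorname{tr}(\operatorname{adj}(J)\,\partial_\ell J)$, which re-expresses $\nabla\op B_1(0;0)$ and its first derivatives as contractions of the Hessian tensor $\partial_j\partial_\ell f_i(0;0)$ against the null vectors $\lambda$ and $\mu$. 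A similar Jacobi-type differentiation expands $\partial_m C_{k\ell}(J)(0;0)$ into $(n-2)\times(n-2)$ minors of $J(0;0)$ multiplied by Hessian entries, all of whose rows lie in $V$.

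The hard part will be assembling these pieces into something manifestly zero at $(0;0)$. I expect the cleanest route is an ideal-theoretic argument in the spirit of Thom--Boardman \cite{boardman67,mond20}: the rank-one factorisation of $\operatorname{adj}(J)(0;0)$, the containment $\nabla\op B_1(0;0)\in V$, and the identity $\mu\cdot\nabla\op B_1(0;0)=0$ combine to show that $\op B_{2,k}$ lies in the square of the local ideal generated by $f_1,\dots,f_n$ and $\op B_1$ at the origin. Since each of these generators vanishes at $(0;0)$ (by $\v F(0;0)=0$, $\op B_1(0;0)=0$, and the value-vanishing derived above for $\op B_{2,j}$), membership in the ideal squared forces both value and first derivatives to vanish there, delivering $\nabla\op B_{2,k}(0;0)=0$ for every $k=1,\dots,n$.
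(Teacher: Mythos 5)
Your first paragraph is a correct and complete proof of the content of the lemma, and it is essentially the paper's argument in different clothing: the paper shows that $\op B_{2,j}(0;0)=0$ forces $\nabla\op B_1(0;0)$ into the $(n-1)$-dimensional row span $\op T$ of $\nabla\v F(0;0)$ (via \cref{thm:bigspan} and \cref{thm:sing}), whence every $\op B_{2,k}(0;0)$ is the determinant of a matrix whose $n$ rows lie in $\op T$ and so vanishes. Your rank-one factorisation $\operatorname{adj}(J)=c\,\mu\lambda^{\!\top}$ with $\op B_{2,k}=c\,\lambda_k\,(\mu\cdot\nabla\op B_1)$ says exactly the same thing ($\mu\cdot\nabla\op B_1=0$ is equivalent to $\nabla\op B_1\in(\ker J)^\perp=\op T$), and it makes the role of the subrank hypothesis pleasantly explicit: $\lambda_j\neq0$ for every $j$ is precisely the statement that no $n-1$ rows of $J$ are dependent. (One small caveat: subrank constrains rows, hence $\lambda$; it does not give you that every entry of $\mu$ is nonzero, but you never use that.)

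The second and third paragraphs, however, chase a statement that is false. The ``$\nabla$'' in the lemma's conclusion is evidently a typo for the value-vanishing $\op B_{2,k}(0;0)=0$: that is all the paper's own proof establishes, it is all that \cref{thm:Brall} uses, and the gradient version fails already for the cusp primary form $\v F=(x_1^3+\tau x_2,\lambda x_2)$, where $\op B_1=3\lambda x_1^2$, $\op B_{2,1}=6\lambda^2x_1$, so $\op B_1(0;0)=\op B_{2,1}(0;0)=0$ but $\nabla\op B_{2,1}(0;0)=(6\lambda^2,0)\neq0$. Your proposed mechanism for the ``upgrade'' is also incorrect on its own terms: in this example the local ideal generated by $f_1,f_2,\op B_1$ is $(x_1^2,x_2)$, and $\op B_{2,1}=6\lambda^2x_1$ lies neither in its square nor even in the ideal itself. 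So you should delete the second step entirely; your first paragraph already proves what the lemma is actually asserting and is all that the rest of the paper requires.
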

\begin{corollary}\label{thm:Brall}
	If $\op B_1(0;0)=\op B_{2}(0;0)=...=\op B_{r}(0;0)=0$, then $\op B_1(0;0)=\op B_{2,K(1)}(0;0)=....=\op B_{r,K(r-1)}(0;0)=0$ for all $K(1),...,K(r-1)$, hence we can reduce each of the $\op B_{i,K(i-1)}$ to one choice $\op B_{i}$ for $i=1,...,r$. 
\end{corollary}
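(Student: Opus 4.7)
I would prove the corollary by induction on $i=1,\ldots,r$, establishing that $\op B_{i,K(i-1)}(0;0)=0$ for every string $K(i-1)$. The central algebraic tool is a cofactor identity: expanding $\op B_{i,K(i-1)}$ along its one replaced row gives
\[\op B_{i,K(i-1)}\;=\;\sum_{j=1}^{n} C_{k_{i-1},j}\,\frac{\partial \op B_{i-1,K(i-2)}}{\partial x_j}\,,\]
where $C_{k_{i-1},j}$ denotes the $(k_{i-1},j)$-cofactor of $J:=\nabla\v F$. At $(0;0)$ we have $\op B_1=\det J=0$ and, by $\subrank\v F=n-1$, the rank of $J|_0$ is exactly $n-1$. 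Consequently the cofactor matrix of $J|_0$ has rank one and factorises as $\v u\,\v w^{\top}$, where $\v u$ and $\v w$ span the left and right null spaces of $J|_0$. The subrank hypothesis also forces every $u_k$ and $w_j$ to be nonzero: were some $u_k=0$, all cofactors along the $k$-th row of $J|_0$ would vanish, contradicting the fact that the remaining $n-1$ gradients $\nabla f_\ell$ ($\ell\ne k$) span an $(n-1)$-dimensional space.

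\noindent\textbf{The central identity.} Substituting the rank-one factorisation into the cofactor identity yields
\[\op B_{i,K(i-1)}(0;0)\;=\;u_{k_{i-1}}\bigl(\v w\cdot\nabla\op B_{i-1,K(i-2)}\bigr)(0;0)\,.\]
The lemma (case $i=2$) is then immediate: $\op B_{2,j}(0;0)=0$ together with $u_j\ne 0$ forces $(\v w\cdot\nabla\op B_1)(0;0)=0$, hence $\op B_{2,k}(0;0)=0$ for every $k$. For the inductive step at general $i$, however, I need $(\v w\cdot\nabla\op B_{i-1,K(i-2)})(0;0)=0$ for \emph{every} string $K(i-2)$, whereas the hypothesis $\op B_i(0;0)=\op B_{i,1\cdots1}(0;0)=0$ only delivers this directly for the single string $K(i-2)=1\cdots1$. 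This mismatch is the main obstacle.

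\noindent\textbf{Closing the obstacle.} My plan is to strengthen the inductive hypothesis by establishing the factorisation
\[\op B_{j,K(j-1)}(0;0)\;=\;u_{k_1}\cdots u_{k_{j-1}}\,\Psi_j(0;0)\qquad\text{for all }j\le i-1\,,\]
with a common scalar $\Psi_j(0;0)$ independent of $K(j-1)$, together with an analogous $K$-separating formula for $(\v w\cdot\nabla\op B_{j,K(j-1)})(0;0)$. Both formulas follow by unwinding the nested cofactor identity at $(0;0)$ and reapplying the rank-one decomposition $C|_0=\v u\,\v w^{\top}$ at each level; the symmetry between replacing the $k_s$-th row at step $s$ of the nesting and the factor $u_{k_s}$ coming out of the cofactor is what makes the $K$-dependence separate cleanly. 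The vanishing of $\op B_i(0;0)$ then transfers through the common scalar to every $\op B_{i,K(i-1)}(0;0)$. A cleaner alternative, should the bookkeeping prove fiddly, is to invoke the splitting lemma first (permissible precisely because $\subrank\v F=n-1$): after a local coordinate change one may take $\v F=\bigl(f_1(x_1,\ldots,x_n),\,x_2,\ldots,x_n\bigr)$, in which every $\op B_{i,K(i-1)}$ evaluates to $\partial^i f_1/\partial x_1^i$ times a nonzero constant determined by $K(i-1)$, whereupon the corollary follows at once from $\op B_i(0;0)=0$.
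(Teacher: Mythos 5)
Your base case is correct and is, in substance, the paper's own argument in algebraic clothing: the rank-one factorisation $C|_0=\v u\,\v w^{\top}$ of the cofactor matrix, with every $u_k\neq0$ forced by $\subrank\v F=n-1$, is exactly the statement that the rows of $\nabla\v F(0;0)$ span a hyperplane $\op T=\v w^{\perp}$, and your identity $\op B_{2,k}(0;0)=u_k\,(\v w\cdot\nabla\op B_1)(0;0)$ is the statement that one replaced-row determinant vanishes iff $\nabla\op B_1(0;0)\in\op T$ iff they all do. (A small quibble: $w_j\neq0$ does not follow from the subrank hypothesis, which constrains rows rather than columns of $\nabla\v F$ -- e.g.\ $\nabla\v F=\left(\begin{smallmatrix}1&0\\1&0\end{smallmatrix}\right)$ has subrank $1$ but $w_1=0$ -- though you never actually use it.) You have also put your finger on the genuine crux of the corollary, namely that the hypothesis only controls the string $1\cdots1$ at each level; the paper's proof, which consists of the single sentence ``apply \cref{thm:Br} at each order,'' passes over this point silently.

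The difficulty is that your proposed repair does not close the gap. The factorisation $\op B_{j,K(j-1)}(0;0)=u_{k_1}\cdots u_{k_{j-1}}\Psi_j(0;0)$ does not follow from ``unwinding the nested cofactor identity'': differentiating $\op B_{2,k_1}=\sum_jC_{k_1,j}\,\partial_j\op B_1$ and contracting with $\v w$ gives
\[
(\v w\cdot\nabla\op B_{2,k_1})(0;0)=\sum_j\big(\v w\cdot\nabla C_{k_1,j}\big)(0;0)\,\partial_j\op B_1(0;0)+u_{k_1}\sum_{j,l}w_jw_l\,\partial_j\partial_l\op B_1(0;0)\,,
\]
and while the second term separates as you want, the first does not: the rank-one structure of $C$ holds only at the singular point itself, so $\nabla C_{k_1,j}(0;0)$ carries no factor of $u_{k_1}$, and $\nabla\op B_1(0;0)$ need not vanish (only its component along $\v w$ does). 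Hence the $K$-dependence does not separate and the strengthened induction is unsupported. The fallback via the splitting lemma is also not available here: reducing a general $\v F$ to the primary form \cref{primary} is precisely the normal-form result the paper defers to forthcoming work, and it would additionally require knowing how the $\op B_{i,K(i-1)}$ transform under coordinate changes, which is not established. Within the paper's own framework the equivalence across all strings $K(i-1)$ is ultimately secured not by iterating \cref{thm:Br} alone but by \cref{thm:Buni}, i.e.\ by the non-degeneracy conditions $\op G_{r,K(r-1)}\neq0$ together with the inverse function theorem; a complete argument along your lines must either invoke those conditions or prove directly that $\nabla\op B_{i-1,K(i-2)}(0;0)\in\op T$ for \emph{every} string $K(i-2)$, which is exactly the ingredient still missing.
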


We will further need to show that the vanishing of $\op B_i$ implies vanishing of all of the minors involved in the $i^{th}$ Boardman symbol. To show this we will make use of the following. 
\begin{lemma}[Vanishing minors]\label{thm:minors}
	If $\op B_1(0;0)=0$ and all $\op B_{r,K(r-1)}(0;0)=0$ for some $r>1$, then any of the $n\times n$ minors of any $N\times n$ matrix formed from $N>n$ rows of the form $\nabla\op B_{r-1,K(r-2)}(0;0)$ must vanish. 
\end{lemma}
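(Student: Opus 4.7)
The plan is to leverage the hypothesis $\subrank\v F(0;0)=n-1$ together with $\op B_1(0;0)=0$ to pin down a single $(n-1)$-dimensional subspace $V\subset\mathbb R^n$ in which \emph{every} vector $\nabla\op B_{r-1,K(r-2)}(0;0)$ must lie. Once every such gradient is trapped inside an $(n-1)$-dimensional subspace, any $N>n$ of them are automatically linearly dependent when taken $n$ at a time, so all $n\times n$ minors of the described matrix vanish.

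First I would isolate the subspace $V$. The rows $\nabla f_1(0;0),\ldots,\nabla f_n(0;0)$ are linearly dependent because $\op B_1(0;0)=\det\nabla\v F(0;0)=0$, while the subrank condition guarantees that any $n-1$ of them are linearly independent. A short linear-algebra argument then shows that the span of any $n-1$ of these rows is the same $(n-1)$-dimensional subspace $V=\mathrm{span}\{\nabla f_1(0;0),\ldots,\nabla f_n(0;0)\}$, independent of which row is deleted.

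Next I would convert the vanishing of $\op B_{r,K(r-1)}(0;0)$ into a containment statement. By \cref{Bri}, $\op B_{r,K(r-1)}$ is the determinant of the matrix obtained from $\nabla\v F$ by replacing the $k_{r-1}^{\text{th}}$ row with $\nabla\op B_{r-1,K(r-2)}$. Its vanishing at the origin, combined with the linear independence of the remaining $n-1$ rows $\nabla f_i(0;0)$ for $i\neq k_{r-1}$, forces $\nabla\op B_{r-1,K(r-2)}(0;0)$ to lie in their span, which is exactly $V$. Since the hypothesis supplies $\op B_{r,K(r-1)}(0;0)=0$ for every $k_{r-1}$ and every string $K(r-2)$, each such gradient lies in the common subspace $V$.

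The conclusion is then immediate: any $n\times n$ submatrix of an $N\times n$ matrix whose rows are drawn from the collection $\{\nabla\op B_{r-1,K(r-2)}(0;0)\}$ has all of its rows in the $(n-1)$-dimensional subspace $V$, hence those $n$ rows are linearly dependent and the submatrix determinant vanishes. The only delicate point is the observation, made in the first step, that the span of any $n-1$ of the $\nabla f_i(0;0)$ is the \emph{same} $V$; everything else is a direct reading of the definition \cref{Bri}.
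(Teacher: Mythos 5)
Your proof is correct and follows essentially the same route as the paper's: identify the common $(n-1)$-dimensional subspace $\op T$ spanned by the rows of $\nabla\v F(0;0)$ (using the standing assumption $\subrank\v F(0;0)=n-1$ together with $\op B_1(0;0)=0$), show each $\nabla\op B_{r-1,K(r-2)}(0;0)$ is trapped in it by the vanishing of the replaced-row determinants, and conclude by the rank bound. The only cosmetic difference is that the paper packages the containment step as \cref{thm:bigspan}, whereas you derive it directly from the vanishing determinant plus the linear independence of the remaining $n-1$ rows.
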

\begin{corollary}\label{thm:minorsall}
	If all $\op B_1(0;0)=...=\op B_{r,K(r-1)}(0;0)=0$, then any of the $n\times n$ minors of any $N\times n$ matrix formed from $N>n$ rows $$\nabla F_j(0;0),\;\nabla\op B_1(0;0),\;...,\;\nabla\op B_{r-1,K(r-2)}(0;0)\;,$$ must vanish. 
\end{corollary}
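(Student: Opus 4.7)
The plan is to extend \cref{thm:minors} by showing that, at $(0;0)$, every row of the claimed $N\times n$ matrix lies in a single $(n-1)$-dimensional subspace $V \subset \mathbb R^n$. Once this is established, any $n$ of those rows are linearly dependent, and every $n\times n$ minor vanishes automatically.

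First I would fix $V$ as the unique $(n-1)$-dimensional subspace of $\mathbb R^n$ containing all $\nabla f_j(0;0)$. Existence follows from the subrank hypothesis $\subrank \v F(0;0) = n-1$, which makes any $n-1$ of the gradients $\nabla f_j(0;0)$ linearly independent and so spanning an $(n-1)$-dimensional subspace, combined with $\op B_1(0;0) = |\nabla \v F(0;0)| = 0$, which forces each remaining $\nabla f_j(0;0)$ into that span; a brief linear-algebra check verifies that all of these $(n-1)$-dimensional spans coincide in a single subspace $V$.

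The key step is an induction on $i$ establishing that $\nabla \op B_{i,K(i-1)}(0;0) \in V$ for every $i = 1,\dots,r-1$ and every string $K(i-1)$. By \cref{thm:Brall}, the hypothesis that all $\op B_{i+1,K(i)}(0;0) = 0$ holds in particular for the choice $K(i) = K(i-1)\cdot 1$, so
\[
0 = \op B_{i+1,\,K(i-1)\cdot 1}(0;0) = \left|\nabla\bb{\op B_{i,K(i-1)},\,f_2,\dots,f_n}(0;0)\right|.
\]
Since $\nabla f_2(0;0),\dots,\nabla f_n(0;0)$ are independent and span $V$, this determinant vanishes precisely when $\nabla \op B_{i,K(i-1)}(0;0) \in V$. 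Combining this with the earlier observation that every $\nabla f_j(0;0) \in V$ places every row named in the corollary inside $V$, finishing the proof.

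The main obstacle will be tracking the string indexing $K(i-1)$ and the placement symbol $\backslash^{k_{i-1}}$ cleanly through the induction. The tidy route is to specialise to $k_{i-1}=1$ at each step, so the determinant at each level is taken against the fixed spanning set $\nabla f_2,\dots,\nabla f_n$ of $V$; the freedom to pick any specific permutation from the blanket hypothesis ``all $\op B_{i+1,K(i)}(0;0) = 0$'' is exactly what \cref{thm:Brall} provides.
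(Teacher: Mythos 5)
Your proposal is correct and follows essentially the same route as the paper: both arguments show that every row $\nabla f_j(0;0)$ and $\nabla\op B_{i,K(i-1)}(0;0)$ lies in the single $(n-1)$-dimensional subspace $\op T$ spanned by the rows of $\nabla\v F(0;0)$, so any matrix built from such rows has rank at most $n-1$ and all its $n\times n$ minors vanish. Your explicit induction using the vanishing of $\op B_{i+1,K(i-1)\cdot 1}$ against the independent set $\nabla f_2,\dots,\nabla f_n$ is just a spelled-out version of the paper's appeal to \cref{thm:sing} and the argument of \cref{thm:minors} at each level.
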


Lastly we will need the converse of \cref{thm:Br}, namely to say that $\op B_i=0$ {\it if and only if} all $\op B_{i,K(i-1)}=0$. We want to show this in the extended space $\mathbb R^n\times\mathbb R^r$. 
\begin{lemma}[Uniqueness of $\op B_i$]\label{thm:Buni}
The condition $\op G_{r,K(r-1)}(0;0)\neq0$ for all $K(r-1)=k_1...k_{r-1}$, with the assumption \cref{0span}, implies that $\op B_i(0;0)=0$ {\it if and only if} all $\op B_{i,K(i-1)}(0;0)=0$ for $i=1,...,r$. 
\end{lemma}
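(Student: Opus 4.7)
My plan is to establish the biconditional at $(0;0)$ in two directions and then argue why the $\op G$ hypothesis is essential. I read the statement as asserting that at the point $(0;0)$, the sequence $\op B_1(0;0)=\op B_2(0;0)=\cdots=\op B_r(0;0)=0$ holds if and only if every alternative $\op B_{i,K(i-1)}(0;0)=0$ holds for each $i=1,\dots,r$ and every string $K(i-1)=k_1\cdots k_{i-1}$.

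The first step, the $(\Leftarrow)$ direction, is immediate: by the convention in \cref{def:bg} we have $\op B_i\equiv\op B_{i,1\cdots 1}$, so having all $\op B_{i,K(i-1)}(0;0)=0$ already contains the canonical string and yields $\op B_i(0;0)=0$. No use of $\op G$ or of \cref{0span} is needed here.

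The $(\Rightarrow)$ direction is where the structural lemmas do the work; it is essentially \cref{thm:Brall}, so my plan is to invoke it directly. I would first check that the assumption \cref{0span} supplies the subrank-$(n-1)$ hypothesis implicit in \cref{thm:Br}, then proceed by induction on $i$. At level $i$, assuming inductively that every $\op B_{j,K(j-1)}(0;0)=0$ for $j<i$ and every $K(j-1)$, applying \cref{thm:Br}-type arguments with the chosen slot $k_{i-1}$ swapped in yields vanishing of the new $\op B_{i,K(i-1)}$ for each string. Iterating over $i=1,\dots,r$ produces the required vanishing of all alternative determinants.

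Finally I would show that the $\op G$ hypothesis gives the equivalence real force, not just a pointwise tautology. The condition $\op G_{r,K(r-1)}(0;0)\neq 0$ is precisely the nonvanishing of the Jacobian of the map
\begin{equation*}
(\v x,\alphab)\;\longmapsto\;\bigl(f_1,\dots,f_n,\op B_1,\op B_{2,K(1)},\dots,\op B_{r,K(r-1)}\bigr)
\end{equation*}
from $\mathbb R^n\times\mathbb R^r$ to $\mathbb R^{n+r}$, so the inverse function theorem makes this map a local diffeomorphism at $(0;0)$. Its zero locus is therefore the isolated point $(0;0)$, and since this applies for every choice of $K$, the canonical system $\{\v F=\op B_1=\cdots=\op B_r=0\}$ and the larger ``all alternatives'' system share the same local root, so the biconditional holds as an equality of germs, not merely a pointwise coincidence. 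The main obstacle I anticipate is bookkeeping the induction in the $(\Rightarrow)$ direction cleanly, in particular tracking that at each level the hypothesis needed for the next application of \cref{thm:Br} (namely vanishing of \emph{all} previous-level $\op B_{j,K(j-1)}$, not just the canonical one) has already been established by the inductive step — this is where invoking \cref{thm:Brall} as a black box streamlines the argument and frees the proof to concentrate on the implicit function theorem consequences of $\op G\neq 0$.
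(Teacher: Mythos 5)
Your proposal is correct and follows essentially the same route as the paper: the forward direction is obtained from \cref{thm:Br}/\cref{thm:Brall} under the subrank assumption \cref{0span}, the reverse direction is immediate since $\op B_i\equiv\op B_{i,1\cdots1}$, and the role of $\op G_{r,K(r-1)}\neq0$ is exactly as you describe — the inverse function theorem makes the root $(0;0)$ isolated for every choice of string $K$, so all the systems share the same local root and the equivalence holds as a statement about germs. Your write-up is in fact somewhat more clearly organised than the paper's own terse argument, but there is no substantive difference in method.
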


These allow us to prove the main result. 
\begin{theorem}[Underlying catastrophes and Thom-Boardman symbols]\label{thm:TBB}
	An underlying catastrophe of codimension $r$, given by \cref{def:sings} with $\op B_1(0;0)=...=\op B_{r}(0;0)=0$, is equivalent to a Thom-Boardman singularity with Boardman symbol $\tb=1,...,1$ (of length $r$). 
\end{theorem}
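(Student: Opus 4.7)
The plan is to prove both directions of the equivalence by induction on the codimension $r$, leveraging the lemmas just established: \cref{thm:Br} and \cref{thm:Brall} propagate $\op B_i=0$ to all permutations $\op B_{i,K(i-1)}=0$; \cref{thm:minors} and \cref{thm:minorsall} force ``mixed'' minors to vanish; and \cref{thm:Buni} gives the reduction back from $\{\op B_{i,K(i-1)}\}$ to $\op B_i$ alone. I will treat $\v F(\cdot;0)$ as a germ $\mathbb R^n\to\mathbb R^n$ at the origin so that the Boardman construction of \cref{def:TBsymbol} applies directly.

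The base case identifies $\tb_1$ with the corank of $\nabla\v F(0;0)$: the assumption $\subrank\v F=n-1$ forces $\operatorname{rank}\nabla\v F(0;0)\ge n-1$, so $\tb_1\le 1$, while $\op B_1=\det\nabla\v F=0$ makes $\tb_1\ge 1$; thus $\tb_1=1$ iff $\op B_1(0;0)=0$. For the inductive step in the forward direction, assume $\op B_1(0;0)=\cdots=\op B_r(0;0)=0$ and $\op G_{j,K(j-1)}\ne 0$ for all $j\le r$. \cref{thm:Buni} then gives $\op B_{j,K(j-1)}(0;0)=0$ for all admissible $j$ and strings $K(j-1)$. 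I would unwind the Boardman construction to show that after $i-1$ iterations of $\Delta^1$, the germ $\Delta^1\cdots\Delta^1\v F$ comprises precisely the components $f_1,\ldots,f_n$ together with $\op B_1,\op B_{2,k_1},\ldots,\op B_{i-1,K(i-2)}$, because each $\Delta^1$ step appends the $n\times n$ minors of the current gradient and those minors expand, by cofactor expansion along the most recently inserted row, into the functions $\op B_{j,K(j-1)}$. \cref{thm:minorsall} then makes every $n\times n$ minor of $\nabla(\Delta^1\cdots\Delta^1)\v F$ vanish at $(0;0)$, while the subrank hypothesis keeps its rank at least $n-1$, yielding $\tb_i=1$ for each $i\le r$.

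The converse direction extracts $\op B_i=0$ from $\tb_i=1$ almost for free: if every $n\times n$ minor of $\nabla(\Delta^1\cdots\Delta^1)\v F$ vanishes, in particular the distinguished minor $\op B_{i,1\ldots 1}\equiv\op B_i$ vanishes, so iterating over $i=1,\ldots,r$ gives $\op B_1=\cdots=\op B_r=0$. Termination of the Boardman sequence at exact length $r$, i.e.\ $\tb_{r+1}=0$, matches the codimension assumption: by \cref{thm:Buni}, $\op B_{r+1}=0$ would force every $\op B_{r+1,K(r)}=0$, and then \cref{thm:minorsall} would make every $n\times n$ minor at the next Boardman stage vanish, giving $\tb_{r+1}=1$ and a catastrophe of codimension strictly greater than $r$, contradicting the codimension being exactly $r$.

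The main obstacle is purely combinatorial: one must confirm that the $n\times n$ minors of $\nabla(\Delta^1\cdots\Delta^1)\v F$ appearing at Boardman step $i$ consist not only of the $\op B_{i,K(i-1)}$ but also of ``mixed'' minors built from gradient rows $\nabla\op B_{j_1,\cdot},\ldots,\nabla\op B_{j_s,\cdot}$ drawn from several earlier levels $j_1<\cdots<j_s<i$, and to show that these mixed minors impose no independent constraints. \cref{thm:minorsall} is tailored precisely for this, ensuring every such mixed minor vanishes automatically once the lower-level $\op B$'s do. Consequently, the superfactorial population of minor conditions enumerated in \cref{sec:count} collapses stage-by-stage to the family $\{\op B_{i,K(i-1)}=0\}$, which by \cref{thm:Buni} collapses further to the single condition $\op B_i=0$, establishing the claimed equivalence.
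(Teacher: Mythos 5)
Your proposal is correct and follows essentially the same route as the paper's proof: identify the $n\times n$ minors arising at each Boardman stage with the determinants $\op B_{i,K(i-1)}$ (plus repeats and ``mixed'' minors killed by \cref{thm:minors} and \cref{thm:minorsall}), propagate the vanishing via \cref{thm:Br}/\cref{thm:Brall}, and use the subrank hypothesis to pin each corank at exactly $1$; the paper merely unrolls this explicitly for $r=1,2,3$ where you package it as an induction. The only cosmetic difference is that the forward propagation from $\op B_i=0$ to all $\op B_{i,K(i-1)}=0$ is more precisely attributed to \cref{thm:Brall} than to \cref{thm:Buni}, but this does not affect the argument.
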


\bigskip
\subsection{Proof of the above}\label{sec:proofs}

Denote the linear subspace spanned by the rows of $\nabla\v F(0;0)$ as $\op T\subset\mathbb R^n$. 

First take the trivial case $r=0$, so there is no Boardman symbol (or the symbol is just 0), in particular the first Boardman symbol is trivially $\tb_1=0$. So the rank of $\nabla\v F(0;0)$ is $n$, i.e. $\op B_1(0;0)=\det\nabla\v F(0;0)\neq0$ and there is no singularity at the origin, and hence the subspace $\op T$ is $n$-dimensional. 

\medskip

Then let us start with two simple and generally known results that we will use below. 
\begin{lemma}[Singularity]\label{thm:sing}
	If $\op B_1(0;0)=0$ and $\subrank\v F(0;0)=n-1$, the linear subspace spanned by the rows of $\v F(0;0)$ is an $(n-1)$-dimensional hyperplane $\op T\subset\mathbb R^n$. 
\end{lemma}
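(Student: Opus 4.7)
The plan is to bound the rank of the Jacobian $\nabla\v F(0;0)$ from above and below separately, and observe that the two bounds pin it down exactly at $n-1$, so that the row span $\op T$ (as defined immediately before the lemma) is an $(n-1)$-dimensional linear hyperplane in $\mathbb R^n$.

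For the upper bound, I would unfold \cref{Br} with the convention $\op B_0\equiv f_1$: then $\op B_1=\det\nabla(f_1,f_2,\ldots,f_n)=\det\nabla\v F$, so the hypothesis $\op B_1(0;0)=0$ is exactly the statement that $\nabla\v F(0;0)$ is singular, and its rank (equivalently, $\dim\op T$) is therefore at most $n-1$. For the lower bound I would invoke \cref{def:subrank}: the assumption $\subrank\v F(0;0)=n-1$ says that for every $j\in\cc{1,\ldots,n}$, the $n-1$ gradient vectors obtained by replacing $\nabla f_j$ with $\v 0$ still span a subspace of dimension $n-1$. Picking any single such $j$ exhibits $n-1$ gradient vectors whose span has dimension $n-1$, which forces them to be linearly independent; hence $\nabla\v F(0;0)$ contains $n-1$ linearly independent rows and so has rank at least $n-1$.

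Combining the two bounds gives $\operatorname{rank}\nabla\v F(0;0)=n-1$, so $\op T$ is an $(n-1)$-dimensional linear hyperplane in $\mathbb R^n$ as claimed. There is no real obstacle here; the lemma is a direct unpacking of the definitions \cref{Br} and \cref{def:subrank}, and is essentially a bookkeeping statement that sets up the geometric picture (a single degenerate direction transverse to $\op T$) used in the subsequent proofs of \cref{thm:Br}--\cref{thm:TBB}. The one mild subtlety worth flagging is that the subrank hypothesis gives $n-1$ vectors spanning an $(n-1)$-dimensional subspace, and the step from there to linear independence relies on there being exactly $n-1$ such vectors --- a pigeonhole-style observation that is immediate but worth stating explicitly.
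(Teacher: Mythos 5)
Your proof is correct and follows essentially the same route as the paper's: the vanishing of $\op B_1=\det\nabla\v F$ bounds the rank of $\nabla\v F(0;0)$ above by $n-1$, and the subrank hypothesis supplies $n-1$ linearly independent rows to bound it below, so the Jacobian has corank $1$ and $\op T$ is an $(n-1)$-dimensional hyperplane. Your version merely spells out the two bounds more explicitly than the paper's one-line argument.
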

\begin{proof}[Proof of \cref{thm:sing}]
If $\op B_1(0;0)=\det\nabla\v F(0;0)=0$ and $\subrank\v F(0;0)=n-1$, then $\nabla\v F(0;0)$ has corank 1 (and hence defines a singularity with first Boardman symbol $\tb_1=1$), 
therefore the subspace $\op T$ is $(n-1)$-dimensional. 
\end{proof}

\begin{lemma}[Equivalence of subspaces]\label{thm:bigspan}
Let $M=\nabla\v F$ and $\tilde M_j=\nabla(\v F\backslash^jg)=\bb{\nabla f_1,...,\nabla f_{j-1},\nabla g,\nabla f_{j+1},...,\nabla f_n}$, where $\v F=\bb{ f_1,..., f_n}$ for smooth functions $f_i:\mathbb R^n\to\mathbb R^n$ and $g:\mathbb R^n\to\mathbb R^n$. Let $M(0;0)$ and $\tilde M_j(0;0)$ have corank 1 and let $\subrank\v F(0;0)=n-1$. Then the rows of $M(0;0)$ and $\tilde M(0;0)$ span the same $(n-1)$-dimensional subspace $\op T\subset\mathbb R^n$. 
\end{lemma}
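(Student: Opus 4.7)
The plan is to leverage the subrank hypothesis to produce an explicit $(n-1)$-dimensional subspace inside both row spaces, and then use the corank-$1$ hypothesis on $\tilde M_j$ to force $\nabla g(0;0)$ to lie in that same subspace.

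First, I would fix the common ``backbone'' of rows. Both $M(0;0)$ and $\tilde M_j(0;0)$ share the $n-1$ rows $\nabla f_1(0;0),\ldots,\nabla f_{j-1}(0;0),\nabla f_{j+1}(0;0),\ldots,\nabla f_n(0;0)$; this is precisely the matrix $\nabla\v F(0;0)\backslash^j\v 0$ appearing in \Cref{def:subrank}. By the hypothesis $\subrank\v F(0;0)=n-1$, this collection has rank $n-1$, so it spans some $(n-1)$-dimensional subspace, call it $\op S\subset\mathbb R^n$.

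Next, I would identify $\op S$ with $\op T$. Since $M(0;0)$ has corank $1$, its row space $\op T$ has dimension exactly $n-1$. The $n-1$ rows defining $\op S$ are a subset of the rows of $M(0;0)$, hence $\op S\subseteq\op T$; but both subspaces have dimension $n-1$, so $\op S=\op T$. Thus the shared backbone alone already spans $\op T$.

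Finally, I would examine $\tilde M_j(0;0)$. Its rows consist of the same backbone (which spans $\op T$) together with the extra row $\nabla g(0;0)$. If $\nabla g(0;0)\notin\op T$, then $\tilde M_j(0;0)$ would have rank $n$, contradicting the corank-$1$ hypothesis. Hence $\nabla g(0;0)\in\op T$, and the row space of $\tilde M_j(0;0)$ is contained in $\op T$; being $(n-1)$-dimensional, it equals $\op T$. There is no real obstacle here: the only subtlety is making sure the subrank hypothesis is invoked in the correct form so as to guarantee that the backbone genuinely spans $\op T$ independently of the choice of $g$.
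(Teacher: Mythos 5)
Your proof is correct and follows essentially the same route as the paper's: both arguments use the subrank hypothesis to show that the $n-1$ shared rows already span an $(n-1)$-dimensional subspace, which must then coincide with the $(n-1)$-dimensional row spaces of both $M(0;0)$ and $\tilde M_j(0;0)$ by the corank-1 hypotheses. Your extra step showing $\nabla g(0;0)\in\op T$ by contradiction is harmless but not needed, since the containment-plus-equal-dimension argument already forces the row spaces to agree.
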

\begin{proof}[Proof of \cref{thm:bigspan}]
Since $M(0;0)$ and $\tilde M_j(0;0)$ both have corank 1, their rows span $(n-1)$-dimensional subspaces, say $\op T\subset\mathbb R^n$ and $\tilde{\op T}_j\subset\mathbb R^n$, respectively. By \cref{def:subrank}, $\subrank\v F(0;0)=n-1$ means any $n-1$ of the gradient vectors $\nabla f_1$, ..., $\nabla f_n$, omitting $\nabla f_j$ for some $j\in\cc{1,...,n}$, also span a subspace of exactly $(n-1)$ dimensions, so let us call this $\op T'_j$. Now the set $\op T'$ spanned by any $n-1$ of the gradient vectors $\nabla f_1$, ..., $\nabla f_n$, must be a subset both of the set $\op T$ spanned by $M(0;0)$, and the $j^{th}$ set $\tilde{\op T}_j$ spanned by $\tilde M_j(0;0)$, that is $\op T'\subseteq\op T$ and $\op T'\subseteq\tilde{\op T}_j$, but since each of these is $(n-1)$-dimensional this implies $\op T'=\tilde{\op T}_j=\op T$ for every $j$. 
\end{proof}

We use the dimensionality of $\op T$ to prove the main lemmas. First is to show that the vanishing of $\op B_i$ implies the vanishing of all permutations $\op B_{i,K(i-1)}$. We begin with $i=1$. 

\begin{proof}[Proof of \cref{thm:Br}]
Since $\op B_{2,k}=\v F\backslash^k\op B_1$, then if $\op B_{2,k}(0;0)=0$ for some $k$ this implies $\nabla\op B_{1}(0;0)$ spans an $(n-1)$-dimensional linear subspace with $\cc{\nabla f_2,...,\nabla f_n}$, and by \cref{thm:bigspan} this subspace must be $\op T$, i.e. the same linear subspace spanned by the rows of $\nabla\v F(0;0)$. Since \cref{thm:Br} also assumes $\op B_1(0;0)=0$, \cref{thm:sing} applies, hence $\nabla\op B_{2,k}(0;0)=0$ for all $k=1,...,n$. 
\end{proof}

Then extend this to $i>1$. 

\begin{proof}[Proof of \cref{thm:Brall}]
If $\op B_1(0;0)=\op B_{2}(0;0)=...=\op B_{r}(0;0)=0$, then applying \cref{thm:Br} at each order $i=2,...,r,$ implies that $\op B_1(0;0)=\op B_{2,K(1)}=...=\op B_{r,K(r-1)}(0;0)=0$ for every $K(1)\in[1,n]$, ..., $K(r-1)\in[1,n]^{r-1}$
. Hence at each order we can reduce $\op B_{i,K(i-1)}$ to one choice $\op B_{i}$ for $i=1,...,r$. 
\end{proof}

The next step is to relate this to the Boardman symbols by showing that if $\op B_i$ vanishes, then so do the numerous minors in the Thom-Boardman calculations. Again we show this first for $i=1$. 

\begin{proof}[Proof of \cref{thm:minors}]
	If $\op B_{r,K(r-1)}(0;0)=0$ for all $K(r-1)\in[1,...,n]^{r-1}$, then since $\op B_{r,k_1...k_{r-2}k_{r-1}}=\det(\nabla\v F\backslash^{k_{r-1}}\nabla\op B_{r-1,k_1...k_{r-2}})$, this implies by \cref{thm:bigspan} that we have $\nabla\op B_{r-1,K(r-2)}(0;0)\in\op T$ for all $K(r-2)=k_1...k_{r-2}\in[1,n]^{r-2}$. 
	\Cref{thm:minors} also assumes $\op B_1(0;0)=0$, so \cref{thm:sing} applies, so any $N\times n$ matrix formed from $N>n$ rows $\nabla\op B_{r-1,K(r-2)}(0;0)$ has rank at most $n-1$, and any of its $n\times n$ minors vanish. 
\end{proof}

Then extend this to $i>1$. 

\begin{proof}[Proof of \cref{thm:minorsall}]
	If all $\op B_1(0;0)=...=\op B_{r,K(r-1)}(0;0)=0$, then all $\nabla B_1(0;0)=...=\nabla\op B_{r-1,K(r-2)}(0;0)\in\op T$, and moreover \cref{thm:sing} applies, so any $N\times n$ matrix formed from $N>n$ rows $$\nabla F_j(0;0),\nabla\op B_1(0;0),...,\nabla\op B_{r-1,K(r-2)}(0;0),$$ has rank at most $n-1$, and any of its $n\times n$ minors vanish. 
\end{proof}

It is slightly longer to prove the converse to \cref{thm:Br}.  
\begin{proof}[Proof of \cref{thm:Buni}]
Denote the gradient derivative on $\mathbb R^n\times\mathbb R^r$ as $\square=(\nabla_{\v x},\nabla_\alpha)$. 
Say $\v F(0;0)=\op B_1(0;0)=0$ and $\det(\square\v F(0;0),\square\op B_1(0;0))\neq0$, so, by the inverse function theorem, the root $(0,0)$ is isolated. 

Now say $\v F(0;0)=\op B_1(0;0)=...=\op B_{r,1...1}(0;0)=0$, then \cref{thm:Br} implies $\op B_{i,K(i-1)}(0;0)=0$ for all $K(i-1)=k_1...k_{i-1}$. Assume $\op G_{r,K(r-1)}(0;0)\neq0$, i.e. $\det(\square\v F(0;0),\square\op B_1(0;0),\square\op B_{2,k_1}(0;0),...,\square\op B_{r,k_1...k_{r-1}}(0;0))\neq0$ for all $k_1,...,k_{r-1}$, so, by the inverse function theorem, the root $(0,0)$ is isolated in $(\v x,\alphab)$ space, and hence these roots must be the same or all $k_1,...,k_{r-1}$, hence $\op B_i(0;0)=0$ {\it if and only if} all $\op B_{i,K(i-1)}(0;0)=0$ for $i=1,...,r$, local to the origin. 

\end{proof}  


This brings us to the final proof of \cref{thm:TBB}, where we find precisely how the conditions $\op B_i=0$ appear (and are repeated many times over) in the procedure to obtain the Boardman symbols. To make this explicit we will proceed one codimension at a time through \cref{def:TBsymbol}.

\subsection{Proof of \cref{thm:TBB}}\label{sec:TBBproof}

First take the trivial case $r=0$, so there is no Boardman symbol (or the symbol is just 0), this means $\tb_1=0$, so the rank of $\nabla\v F(0;0)$ is $n$, i.e. $\op B_1(0;0)=\det\nabla\v F(0;0)\neq0$ and there is no singularity at the origin. 

Now assume the first Boardman symbol is $\tb_1=1$. Then $\nabla\v F(0;0)$ has corank 1 and defines a singularity, and $\op B_1(0;0)=\det\nabla\v F(0;0)=0$. If the second Boardman symbol is $\tb_2=0$ so the complete symbol is just $\tb=1$, and defines a {\it fold}, then the $(n+1)\times n$ matrix $\nabla\Delta^1\v F(0;0)=\bb{\nabla\v F(0;0),\nabla\op B_1(0;0)}$ has rank $n$, so at least one of its $n\times n$ minors must be nonzero; we will look more closely at those minors in the next step. 

Assume instead that $\tb_1=\tb_2=1$. Then $\nabla\v F(0;0)$ has corank 1, and moreover $\nabla\Delta^1\v F(0;0)$ has corank 1, so all of the $n\times n$ minors of $\nabla\Delta^1\v F(0;0)$ are zero. 
Those minors (with the exception of the minor $\nabla\v F(0;0)$ which we already know vanishes) are precisely the functions we define as $\op B_{2,k}$ for $k=1,...,n$. Let 
\begin{align}\label{minors2}
m^{1}_{2,k}=\op B_{2,k}(0;0)\quad{\rm for} \quad k=1,...,n,\quad{\rm and} \quad m^{1}_{2,n+1}=\op B_1(0;0)\;.
\end{align} 
Hence if $\tb_1=\tb_2=1$ then $\op B_1(0;0)=\op B_{2,k}(0;0)=0$ for all $k=1,...,n$. Conversely, if $\op B_1(0;0)=\op B_2(0;0)=0$, then by \cref{Br} we have $\op B_{2,k}(0;0)=0$ for all $k=1,...,n$, implying $\tb_1=\tb_2=1$. 

Now if $\tb_3=0$ we are done and the singularity is a {\it cusp}, then the $2(n+1)\times n$ matrix $\nabla\Delta^1\Delta^1\v F=\cc{\nabla\v F,\nabla\op B_1,\nabla\op B_{2,1},...,\nabla\op B_{2,n},\nabla\op B_1}$ must have rank $n$, implying that at least one of the $n\times n$ minors of $\nabla\bb{\Delta^1\Delta^1\v F(0;0)}$ is nonzero; again we will look more closely at these in the next step. 

Assume instead that $\tb_1=\tb_2=\tb_3=1$. Now $\nabla\v F(0;0)$, $\nabla\Delta^1\v F(0;0)$, and $\nabla\Delta^1\Delta^1\v F(0;0)$ all have corank 1, so all of the $n\times n$ minors of $\nabla\Delta^1\Delta^1\v F(0;0)$ are zero. 
Recalling 
$$\nabla(\Delta^{1}\Delta^{1}\v F)=\bb{\nabla\v F,\nabla \op B_1,\nabla\op B_{2,1},...,\nabla\op B_{2,n},\nabla\op B_1}\;,$$ these minors number $\sfrac{(2(n+1))!}{n!(n+2)!}$ (the binomial coefficient for choosing $n$ rows from $2(n+1)$). 
One of these is $\op B_1=\det\nabla\v F$, another $2n$ of them are the functions $\op B_{2,1},...,\op B_{2,n+1}$ repeated twice (because $\op B_1$ is repeated twice in $\nabla\Delta^1\Delta^1\v F$). 
Another $n^2$ are the determinants of Jacobian matrices in which some $k_2^{th}$ row of $\nabla\v F$ is swapped for one row $\nabla m^{1}_{2,k_1}$, with $k_1,k_2=1,...,n$, and these are precisely the functions $\op B_{3,k_1k_2}$, for $k_1,k_2=1,...,n$. Let
\begin{subequations}\label{minors3}
\begin{align}
m^{1}_{3,l(k_1,k_2)}&=\op B_{3,k_1k_2}(0;0)\quad{\rm for} \quad k_1,k_2=1,...,n,\nonumber \\
				&\qquad\qquad\;\;{\rm with}\quad l(k_1,k_2)=k_1+n(k_2-1)\;,\\
m^{1}_{3,j}&=\op B_{2,j}(0;0)\quad\;{\rm for} \;\;\;\;\quad j=n^2+1,...,n^2+n\;, \\
m^{1}_{3,j}&=\op B_{2,j}(0;0)\quad\;{\rm for} \;\;\;\;\quad j=n^2+n+1,...,n^2+2n\;, \\
m^{1}_{3,j}&=\op B_1(0;0)\quad\;\;\;{\rm for} \;\;\;\;\quad j=n^2+2n+1,...,n^2+2n+2\;, \\
m^{1}_{3,j}&=0	\qquad\quad\;\;{\rm for} \;\qquad j=n^2+n+3,...,\sfrac{(2(n+1))!}{n!(n+2)!}\;.
\end{align} 
\end{subequations}
We have claimed in the last line that the remaining $m^{1}_{3,j}$ for $j=n^2+n+3,...,\sfrac{(2(n+1))!}{n!(n+2)!}$ vanish. These are the determinants of Jacobian matrices formed from $0\le d\le n-2$ rows of $\nabla\v F$ and $2\le d'\le n$ rows from $\bb{\nabla \op B_1,\nabla\op B_{2,1},...,\nabla\op B_{2,n},\nabla\op B_1}$, but by \cref{thm:minors} this has rank $n-1$, so all its $n\times n$ minors vanish. 
So $\op B_{3,k_1k_2}(0;0)=0$ for all $k_1,k_2=1,...,n$ (as well as $\op B_1(0;0)=\op B_{2,k_1}(0;0)=0$ for all $k_1=1,...,n$). 
Hence if $\tb_1=\tb_2=\tb_3=1$ then $\op B_1(0;0)=\op B_{2,k_1}(0;0)=\op B_{3,k_1k_2}(0;0)=0$ for all $k_1,k_2=1,...,n$. Conversely, if $\op B_1(0;0)=\op B_2(0;0)=\op B_3(0;0)=0$, then by \cref{thm:Brall}, $\op B_{2,k_1}(0;0)=\op B_{3,k_1k_2}(0;0)=0$ for all $k_1,k_2=1,...,n$, implying $\tb_1=\tb_2=\tb_3=1$. 

Now if $\tb_4=0$ the singularity is a {\it swallowtail}, and the $\bb{2(n+1)+\chi}\times n$ matrix $$\nabla\Delta^1\Delta^1\Delta^1\v F=\cc{\nabla\v F,\nabla\op B_1,\nabla m^{1}_{2,1},...,m^{1}_{2,n+1},\nabla m^{1}_{3,1},...,\nabla m^{1}_{3,\chi}}$$ where $\chi=\sfrac{(2(n+1))!}{n!(n+2)!}$, must have rank $n$ since $\tb_3=0$, so at least one of its $n\times n$ minors must be nonzero; as usual, inspection of these is left to the next step. 

And so on. At each successive symbol of length $r$, the vanishing of minors is equivalent to the vanishing of all of the quantities $\v F(0;0)$, $\op B_1(0;0)$, ... $\op B_{r,k_1...k_{r-1}}$. At the next order, assuming $\tb_1=\tb_2=\tb_3=\tb_4=1$, for example, the minors include the functions $\op B_{4,k_1k_2k_3}$, and we can define
\begin{align}\label{minors4}
m^{1}_{4,l(k_1,k_2,k_3)}&=\op B_{4,k_1k_2k_3}\quad{\rm for} \quad k_1,k_2,k_3=1,...,n, \nonumber\\
			&\quad{\rm where}\quad l(k_1,k_2,k_3)=k_1+n(k_2-1)+n^2(k_3-1)\;,
\end{align} 
while the remaining $m^{1}_{4,j}$ for $j=n^3+1,...,\sfrac{\bb{2(n+1)+\chi}!}{n!(\bb{2(n+1)+\chi}-n)!}$, consist of the functions $\op B_1$, $\op B_{2,k_1}$, $\op B_{3,k_1k_2}$, for $k_1,k_2=1,...,n$, as well as determinants of Jacobian matrices formed from $0\le d\le n-2$ rows of $\nabla\v F$ and $2\le d'\le n$ rows from $\bb{\nabla \op B_1,\nabla m^{1}_{2,1},...,\nabla m^{1}_{2,n+1},\nabla m^{1}_{3,1},...,\nabla m^{1}_{3,\chi}}$, whose minors vanish by \cref{thm:minorsall}. At each successive codimension, applying \cref{thm:Brall} shows equivalence between the symbols $\tb_1=...=\tb_r=0$ and the conditions $\op B_1(0;0)=...=\op B_r(0;0)=0$.

\rightline{$\square$}

\bigskip

This does not mean that the `underlying catastrophes' defined in \cite{j22cat} are equivalent to the Thom-Boardman singularities, indeed they are not. But they are consistent in that a codimension $r$ underlying catastrophe (where $\v F(0;0)=\op B_1(0;0)=...=\op B_r(0;0)=0$) is a codimension $r$ {\it Morin} singularity, i.e. the conditions are {\it sufficient} to define a zero of $\v F$ lying at such a singularity of $\v F$ viewed as a mapping. 

Importantly in \cref{def:sings} we place a further restriction of `fullness', which ultimately is not {\it necessary} to define a singularity in the sense of $\v F$ as a mapping, but ensures that $\v F$ is determined such that the conditions $\v F(0;0)=\op B_1(0;0)=...=\op B_r(0;0)=0$ are solveable. We complete this section by showing how this follows from the $\op G_{r,K(r-1)}$ conditions.

\bigskip
\subsection{The $\op G$ conditions for being `full' }\label{sec:Gfull}


The only place we have needed to make reference to the parameter dependence of $\v F$ above is in the proof of \cref{thm:Buni}. Taking $\v F:\mathbb R^n\times\mathbb R^r\to\mathbb R^n$, denote again the gradient operator as $\square=(\nabla_{\v x},\nabla_\alpha)$. 
The conditions $\op G_{r,K(r-1)}(0;0)\neq0$ simply state that the extended Jacobian 
\begin{align}\label{Gone}
\op G_r=\frac{\partial(\v F,\op B_1,...,\op B_r)}{\partial(\v x,\alpha_1,...,\alpha_r)}
\end{align}
has full rank $n+r$ at $(\v x,\alphab)=(0,0)$, as do its variants $\op G_{r,k_1...k_{r-1}}$ for the alternative permutations of the $\op B_{i,k_1...k_{r-1}}$. At this point, the gradient vectors $\square F_1$, ..., $\square F_n$, $\square\op B_1$, ..., $\square\op B_r$, span an $(n+r)$-dimensional linear subspace $\op T_{\v x,\alpha}\subset\mathbb R^n\times\mathbb R^r$, which implies by the implicit function theorem that the problem $\v F=\op B_1=...=\op B_r=0$ is solvable at isolated points $(\v x,\alphab)=(\v x_*,\alpha_{1*},...,\alpha_{r*})$.

We can illustrate this using the primary form of a codimension $r$ singularity given in \cref{primary}, namely
\begin{align*}
\v F=&\Big(\;f(x_1,\alphab)+\underline{\tau}\cdot\underline{x}\;,\;\lambda_2x_2\;,\;...\;,\;\lambda_nx_n\;\Big)\\
{\rm where}&\qquad
f(x_1)=x_1^{r+1}+\sum_{i=1}^r\alpha_i x_1^{i-1}\;,\nonumber
\end{align*}
with $\underline{\tau}\cdot\underline{x}=\tau_2x_2+...+\tau_nx_n$, and where the $\tau_i$ and $\lambda_i$ are non-zero constants. 

Calculating $\op B_r$ from \cref{Br}, and the function $\op G_r$ from \cref{Gone}, we have
\begin{align}\label{primaryBG}
\op B_r&=(\lambda_2...\lambda_n)^rf^{(r)}(x_1)\;,\nonumber\\
\op G_r&=c_1(\lambda_2...\lambda_n)^{r+1}f^{(r+1)}(x_1)\;,
\end{align}
and calculating $\op B_{r,k_1...k_{r-1}}$ from \cref{Bri} and $\op G_{r,k_1...k_{r-1}}$ from \cref{Gri}, we have
\begin{align}\label{primaryBG}
\op B_{r,k_1...k_{r-1}}&=c_2\lambda_2^{p_2}\lambda_n^{p_n}\tau_2^{q_2}...\tau_n^{q_n}\op B_r\;,\nonumber\\
\op G_{r,k_1...k_{r-1}}&=c_3\lambda_2^{s_2}...\lambda_n^{s_n}\tau_2^{t_2}...\tau_n^{t_n}\op G_r\;.
\end{align}
The actual values of the various numerical constants $c_1,c_2,c_3$ and $-r\le p_i,q_i\le0$ and $-r-1\le s_i,t_i\le0$ are not important, and depend quite complicatedly on the values of $r$ and the $k_i$s. The point is that we need all $\lambda_i$ and $\tau_i$ to be non-vanishing for the $\op G$ determinants to be non-vanishing, and thereby ensure that the $\op B$ determinants are non-trivial and have well-defined roots coinciding with those of the functions $f^{(r)}$. 
In a following paper \cite{jc23tbprocedure} we intend to prove that the {\it underlying catastrophes} of \cref{def:sings} are in fact transformable, as mappings though not as vector fields, to the functions \cref{primary}.


%
%
%
%
%
%
%

\section{Closing remarks}\label{sec:conc}

The results above justify the methodology laid out in \cite{j22cat} and connect it with the theory of Thom-Boardman singularities. The hope is that these ideas make the application of Thom's ideas to vector fields and spatiotemporal systems much more practical, by providing new ways to find and characterize high codimension bifurcation points.

 A number of examples were given in \cite{j22cat} to illustrate the importance of being {\it full}. 
A non-trivial example illustrating the importance of satisfying \cref{subrankfull}, even when \cref{singdeg} holds, is given by $\v F=(x_1+x_2^2,x_1^2+\alpha_1 x_1+\alpha_2+x_2^2+kx_1)$. If $k_1=0$ then $\subrank\v F(0;0)=0$ and we find $\op B_1(0;0)=\op B_{21}(0;0)=0\neq\op B_{22}(0;0)$, violating \cref{thm:Br}, while the fullness conditions $\op G_{2,1}(0;0)\neq0$ and $\op G_{2,2}(0;0)\neq0$ are still satisfied. If $k_1\neq0$ then \cref{subrankfull} holds. One then finds that there is a fold at the origin as $\alpha_2$ passes through zero, while the event that unfolds with $\alpha_1$ is degenerate (not full), and there is no higher codimension catastrophe since $\op B_{21}(0;0),\op B_{22}(0;0)\neq0$.

We have claimed that any bifurcation has an underlying catastrophe, and established the validity of the conjecture with a suitable definition that relates to zeros of vector field encountering certain corank 1 singularities. 
As already noted in \cite{j22cat}, underlying catastrophes are not equivalence classes, and each one may represent several classes of bifurcations with different stability properties. The idea is that the concept of an underlying catastrophe is used to {\it find} a bifurcation point in a vector field, after which standard bifurcation analysis can be carried out to properly classify it, if required. 

Here we have described more the relation of underlying catastrophes to singularities (rather than bifurcations), and this proves to be rather different, in that there appear to be singularities within Thom's classification that cannot be identified with any underlying catastrophe, because they require too many conditions to identify them. This perhaps rules out some singularities as unfindable by the solution of implicit conditions, but what singularities this applies to, and what this means for their role in applications, remains to be studied.

At present this analysis only applies to corank 1 cases. For these, the \BG~conditions reduce the number of minors needed to calculate the Boardman symbols, which increases factorially with the codimension $r$, to just $\op B_1,...,\op B_r$.
For what other singularities or bifurcations is this possible? It seems obvious, for example, that corank 2 singularities should be related to {\it umbilic} catastrophes, and the first steps towards extending the \BG~conditions for these has been taken in \cite{j24cat}.

\bigskip\bigskip

\noindent{\Large\bf Appendix}

\appendix
\section{Counting minors}\label{sec:count}

The number of minors involved in calculating the Thom-Boardman symbols in \cref{sec:TB} grows `superfactorially', that is, faster than factorially in the codimension $r$. To see this let us calculate the number, $N_j$, of new minors that must be calculated to find each successive $j^{th}$ Boardman symbol, following exactly the procedure in \cref{sec:TB}. 

To find the first Boardman symbol $\tb_1$, one must calculate the $(n-{i_1}+1)\times(n-{i_1}+1)$ minors of $\nabla\v F=(\nabla f_1,...,\nabla f_n)$, which we call $m^{i_1}_1,...,m^{i_1}_{N_1}$. Each minor is formed by choosing $(n-{i_1}+1)$ rows and columns from an $n\times n$ matrix, and so
\begin{align*}
		N_1&=\Cr{n}{n-i_1+1}\cdot\Cr{n}{n-i_1+1}
\end{align*}
where $\Cr{n}{m}$ denotes the binomial coefficient. In finding the second Boardman symbol $\tb_2$, one then calculates the $(n-{i_2}+1)\times(n-{i_2}+1)$ minors of $\nabla\Delta^{i_1}\v F=(\nabla f_1,...,\nabla f_n,m^{i_1}_1,...,m^{i_1}_{N_1})$, which we call $m^{i_2}_1,...,m^{i_2}_{N_2}$, where each minor is formed by choosing $(n-{i_2}+1)$ rows and columns from an $n\times(n+N_1)$ matrix, and so
\begin{align}
		N_2&=\Cr{n}{n-i_2+1}\cdot\Cr{n+N_1}{n-i_2+1}\;.
\end{align}
Continuing iteratively is easy to see that
\begin{align*}
		N_3&=\Cr{n}{n-i_3+1}\cdot\Cr{n+N_1+N_2}{n-i_3+1}
\end{align*}
and so on. 

%

That is, to calculate the Boardman symbol for a codimension $r$ singularity we must evaluate $\op N$ minors, the number
\begin{align}
		\op N(r)&=\sum_{j=1}^rN_j
\end{align}
where 
\begin{align}
		N_j&=\Cr{n}{n-i_j+1}\cdot\Cr{\sum_{k=0}^{j-1}N_k}{n-i_j+1}\;,\quad N_0=n\;.
\end{align}
and the brackets denote the binomial coefficient
\begin{align}
\Big(\mbox{\footnotesize$\begin{array}{c}n\\m\end{array}$}\Big)&=\sfrac{n!}{m!(n-m)!}\;,
\end{align}
\Cref{tab:count} shows $\op N(r)$ for the first few codimensions $r$ for corank 1 (when all $i_j=1$), in systems of different dimension $n$, and the number of $\op B$ determinants this reduces to for an underlying catastrophe.

\begin{table}[h!]
\raggedright\qquad\qquad\qquad
Thom-Boardman singularities\\\qquad\qquad\qquad
\begin{tabular}{|r|l|l|l|l|l|}\hline
codim. $r=\;\rightarrow$	&$1$	&$2$	&$3$	&$4$&$5$	\\
 dim. $n=\;\downarrow$		&&&&&\\\hline
$1$	&2	&4	&8	&16 &32	\\
$2$	&3	&6	&21	&231	 &26796\\
$3$	&4	&8	&64	& 41728 & $12\times10^{12}$	\\
$4$	&5	&10	&220	&\footnotesize94967015 &$3\times10^{30}$ 	\\\hline
\end{tabular}
\\\medskip\qquad\qquad\qquad
underlying catastrophes\\\qquad\qquad\qquad
\begin{tabular}{|r|l|l|l|l|l|}
\hline
codim. $r=\;\rightarrow$	&$1$	&$2\hspace{0.2cm}$	&$3\hspace{0.4cm}$&$4\hspace{1.1cm}$&$5\hspace{1.35cm}$	\\
 dim. $n=\;\downarrow$		&&&&&\\\hline
$1$	&2	&3	&4	&5	&6	\\
$2$	&3	&4	&5	&6	&7	\\
$3$	&4	&5	&6	&7	&8	\\
$4$	&5	&6	&7	&8	&9	\\\hline
\end{tabular}
\caption{\small\sf The number problem: these are the number of minors that are calculated to identify a singularity in the standard Thom-Boardman classification (top), versus the number of $\op B$ determinants needed to identify an underlying catastrophe (bottom), up to dimension $n=4$ and codimension $r=5$.
}
\label{tab:count}
\end{table}

{\small
\bibliography{BG-TB-unifiedR.bbl}
\bibliographystyle{plain} 
}

\end{document}